\gdef\SetFigFont#1#2#3#4#5{%
  \reset@font\fontsize{#1}{#2pt}%
  \fontfamily{#3}\fontseries{#4}\fontshape{#5}%
  \selectfont}%
\DeclareMathOperator*{\argmin}{argmin}
\newtheorem{observation}{Observation}
\begin{document}

\title{Statistical inconsistency of Maximum Parsimony for $k$-tuple-site data}
%\subtitle{Do you have a subtitle?\\ If so, write it here}

%\titlerunning{Short form of title}        % if too long for running head

\author{Michelle Galla  \and
        Kristina Wicke \and  Mareike Fischer 	
         %etc.
}

%\authorrunning{Short form of author list} % if too long for running head

\institute{Michelle Galla \at 
			  Institute of Mathematics and Computer Science \\
			  University of Greifswald, Greifswald, Germany \\
            %  Walther-Rathenau-Str. 47 \\
            %  17489 Greifswald \\
              \email{michelle.galla@uni-greifswald.de}           %  \\
           \and
       Kristina Wicke \at
       	Institute of Mathematics and Computer Science \\
			    University of Greifswald, Greifswald, Germany \\
             % Walther-Rathenau-Str. 47 \\
             % 17489 Greifswald \\
              \email{kristina.wicke@uni-greifswald.de}           %  \\
\and
Mareike Fischer (\Letter) \at
			Institute of Mathematics and Computer Science \\
			    University of Greifswald, Greifswald, Germany \\
             % Walther-Rathenau-Str. 47 \\
             % 17489 Greifswald \\
              \email{email@mareikefischer.de}           %  \\
}

\date{Received: date / Accepted: date}
% The correct dates will be entered by the editor

\maketitle

\begin{abstract}
One of the main aims of phylogenetics is to reconstruct the \enquote{Tree of Life}. In this respect, different methods and criteria are used to analyze DNA sequences of different species and to compare them in order to derive the evolutionary relationships of these species. Maximum Parsimony is one such criterion for tree reconstruction and, it is the one which we will use in this paper. However, it is well-known that tree reconstruction methods can lead to wrong relationship estimates. One typical problem of Maximum Parsimony is long branch attraction, which can lead to statistical inconsistency. In this work, we will consider a blockwise approach to alignment analysis, namely so-called $k$-tuple analyses. For four taxa it has already been shown that $k$-tuple-based analyses are statistically inconsistent if and only if the standard character-based (site-based) analyses are statistically inconsistent. So, in the four-taxon case, going from individual sites to $k$-tuples does not lead to any improvement. However, real biological analyses often consider more than only four taxa. Therefore, we analyze the case of five taxa for $2$- and $3$-tuple-site data and consider alphabets with two and four elements. We show that the equivalence of single-site data and $k$-tuple-site data then no longer holds. Even so, we can show that Maximum Parsimony is statistically inconsistent for $k$-tuple site data and five taxa.
\keywords{Maximum Parsimony \and statistical inconsistency \and codons \and long branch attraction \and Felsenstein zone}
% \PACS{PACS code1 \and PACS code2 \and more}
% \subclass{MSC code1 \and MSC code2 \and more}
\end{abstract}

\section{{Introduction}\label{sec:Intro}}

The reconstruction of the evolutionary relationships between today's living species is one main aim of phylogenetics. 
In order to reconstruct these relationships, mathematical models and methods are used, which are based on certain optimization criteria. 
Maximum Parsimony (MP) is such an optimization criterion which does not assume any specific underlying substitution model (cf. \citep{Fitch, Phylogenetics}; we refer the reader to \citep{Steel-Penny} for a more thorough discussion of the use of models in phylogenetics and the implications for parsimony).
It aims at minimizing the number of evolutionary changes needed to explain the evolution of a group of species, and is thus an intuitive criterion with an evolutionary meaning. However, MP suffers from a well-known problem: Statistical inconsistency in the so-called \enquote{Felsenstein zone}. We will explain this problem with an easy example from the original Felsenstein paper \citep{Felsensteinzone}. Assume that tree $T$ in Figure \ref{lba} shows the evolutionary relationships between species $1$, $2$, $3$ and $4$. 
Note that there are two long edges (labeled with $p$) and three short edges (labeled with $q$) in $T$, representing high and low probabilities of evolutionary change, respectively. Then, there are choices for $p$ and $q$ such that when we consider an alignment that evolves on $T$ and use MP to reconstruct the evolutionary tree from this alignment, MP will favor an incorrect tree. To be more precise, MP will erroneously group the long edges together and favor tree $T'$ depicted in Figure \ref{lba}. This problem is called \emph{long branch attraction} in the Felsenstein zone.  
So if we have a tree of this type, MP may fail to correctly reconstruct the tree, even if more and more data are considered. Thus, the estimation with MP is not \emph{consistent}, where a tree reconstruction method is called consistent if it converges to the true tree as more and more data are considered. We will discuss this in more detail later on.
Thus, long branch attraction has to be taken into account when using MP for tree reconstruction, in particular as it is not just a theoretical problem, but also occurs frequently in real data (cf. \citep{Anderson, LBA2, Sanderson}).

\setlength{\unitlength}{2067sp}
\begin{figure*}[t]
\begin{center}
\begin{picture}(2724,2502)(4939,-5201)
\thinlines
{\color[rgb]{0,0,0}\put(5851,-4336){\line( 1, 0){900}}
\put(6751,-4336){\line( 2, 5){822.414}}
}%
{\color[rgb]{0,0,0}\put(5851,-4336){\line(-2, 5){822.414}}
}%
{\color[rgb]{0,0,0}\put(6751,-4336){\line( 1,-2){225}}
}%
{\color[rgb]{0,0,0}\put(5851,-4336){\line(-1,-2){225}}
}%
\put(5601,-3511){\makebox(0,0)[lb]{\smash{{\SetFigFont{12}{14.4}{\rmdefault}{\mddefault}{\updefault}{\color[rgb]{0,0,0}$p$}%
}}}}
\put(6896,-3511){\makebox(0,0)[lb]{\smash{{\SetFigFont{12}{14.4}{\rmdefault}{\mddefault}{\updefault}{\color[rgb]{0,0,0}$p$}%
}}}}
\put(5751,-4686){\makebox(0,0)[lb]{\smash{{\SetFigFont{12}{14.4}{\rmdefault}{\mddefault}{\updefault}{\color[rgb]{0,0,0}$q$}%
}}}}
\put(6251,-4606){\makebox(0,0)[lb]{\smash{{\SetFigFont{12}{14.4}{\rmdefault}{\mddefault}{\updefault}{\color[rgb]{0,0,0}$q$}%
}}}}
\put(6600,-4686){\makebox(0,0)[lb]{\smash{{\SetFigFont{12}{14.4}{\rmdefault}{\mddefault}{\updefault}{\color[rgb]{0,0,0}$q$}%
}}}}
\put(4826,-2186){\makebox(0,0)[lb]{\smash{{\SetFigFont{14}{16.8}{\rmdefault}{\mddefault}{\updefault}{\color[rgb]{0,0,0}1}%
}}}}
\put(7026,-5151){\makebox(0,0)[lb]{\smash{{\SetFigFont{14}{16.8}{\rmdefault}{\mddefault}{\updefault}{\color[rgb]{0,0,0}4}%
}}}}
\put(5426,-5151){\makebox(0,0)[lb]{\smash{{\SetFigFont{14}{16.8}{\rmdefault}{\mddefault}{\updefault}{\color[rgb]{0,0,0}3}%
}}}}
\put(7526,-2186){\makebox(0,0)[lb]{\smash{{\SetFigFont{14}{16.8}{\rmdefault}{\mddefault}{\updefault}{\color[rgb]{0,0,0}2}%
}}}}
\put(4926,-4186){\makebox(0,0)[lb]{\smash{{\SetFigFont{14}{16.8}{\rmdefault}{\mddefault}{\updefault}{\color[rgb]{0,0,0}$T:$}%
}}}}
\end{picture}%
~~~
\setlength{\unitlength}{3047sp}
\begin{picture}(3280,2515)(3136,-6776)
\thinlines
{\color[rgb]{0,0,0}\put(4951,-5561){\line( 1, 0){450}}
\put(5401,-5561){\line( 1, 0){500}}
\put(5851,-5561){\line( 1, 1){500}}
}%
{\color[rgb]{0,0,0}\put(5851,-5561){\line( 1,-1){500}}
}%
{\color[rgb]{0,0,0}\put(4951,-5561){\line(-1,-1){500}}
}%
{\color[rgb]{0,0,0}\put(4951,-5561){\line(-1, 1){500}}
}%
\put(4151,-6461){\makebox(0,0)[lb]{\smash{{\SetFigFont{14}{20.4}{\rmdefault}{\mddefault}{\updefault}{\color[rgb]{0,0,0}2}%
}}}}
\put(4151,-5111){\makebox(0,0)[lb]{\smash{{\SetFigFont{14}{20.4}{\rmdefault}{\mddefault}{\updefault}{\color[rgb]{0,0,0}1}%
}}}}
\put(6401,-5111){\makebox(0,0)[lb]{\smash{{\SetFigFont{14}{20.4}{\rmdefault}{\mddefault}{\updefault}{\color[rgb]{0,0,0}3}%
}}}}
\put(6401,-6461){\makebox(0,0)[lb]{\smash{{\SetFigFont{14}{20.4}{\rmdefault}{\mddefault}{\updefault}{\color[rgb]{0,0,0}4}%
}}}}
\put(3526,-5686){\makebox(0,0)[lb]{\smash{{\SetFigFont{14}{16.8}{\rmdefault}{\mddefault}{\updefault}{\color[rgb]{0,0,0}$T':$}%
}}}}
\end{picture}
\end{center}
\caption{ $T$ with two long branches (labeled with $p$) and three short branches (labeled with $q$), representing high and low probabilities of evolutionary change, respectively. When using MP to reconstruct an evolutionary tree from an alignment that evolved on $T$, MP will incorrectly favor tree $T'$ over $T$ and will group the two long branches together. Please note that for $T'$ we have no edge lengths, because MP only reconstructs the tree shape, but not the edge lengths.}  
\label{lba}
\end{figure*}
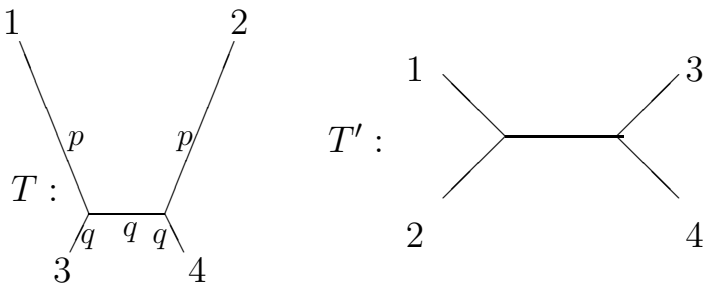

Even though there exist other methods and criteria for tree reconstruction (e.g. Maximum Likelihood or distance-matrix methods), MP is still frequently used (cf. \citep{PhyloAnalysis1,PhyloAnalysis2,PhyloAnalysis4}). 
Therefore, the reconstruction with MP and the statistical inconsistency of methods based on this criterion are of particular interest.
Mike Steel and David Penny, for instance, considered $k$-tuple-site data instead of single-site data for the reconstruction with MP \cite{Steel-Penny}.
Usually, MP is applied to single-site data, i.e. each column of a given alignment is considered individually. When using $k$-tuple-site-data instead, MP is applied to $k$-tuples of sites, where a $k$-tuple consists of $k$ successive sites or characters (e.g. a $2$-tuple is a pair of successive sites and a $3$-tuple is a triple of successive sites). However, it is important to mention that the $k$-tuples as considered by Steel and Penny are not overlapping. Note that considering $k$-tuples of sites instead of single sites changes the underlying alphabet, where the new alphabet consists of all $k$-tuples that can be built from elements of the original alphabet. 
In 2000, Mike Steel and David Penny proved that for four sequences, MP on $k$-tuple-site data is statistically inconsistent if and only if MP on single-site data is statistically inconsistent \cite{Steel-Penny}. This can be regarded as an equivalence between the statistical inconsistency of MP on $k$-tuple-site and single-site data for the special case of four sequences.

Furthermore, using the results of Mike Steel and David Penny \citep{Steel-Penny}, one can conclude that MP is statistically inconsistent for $k$-tuple-site data and four sequences, since single-site data has long been known to be statistically inconsistent \cite{Felsensteinzone}.
From this result, the question arises if this equivalence also holds for five and more sequences. 
In the present manuscript, we therefore investigate alignments with five sequences. 
First, we prove the statistical inconsistency of MP for $2$-tuple-site data, if we have alphabets with two or four elements. 
These alphabets are of particular importance in biology, as the DNA alphabet is an alphabet with four elements, while the set of purine and pyrimidine is an alphabet with two elements.

Moreover, we show the statistical inconsistency of MP on $3$-tuple site data, again for alphabets with two and four elements.
Note that if we consider $3$-tuple-site data for the DNA alphabet, these data induce so-called DNA triplets, also known as DNA codons. Each DNA codon specifies an amino acid \cite{Crick}. Therefore, the consideration of three successive DNA nucleotides as a $3$-tuple is of particular interest in biology. 
Note that there exist certain models to describe codon sequence evolution (e.g. \citep{CodonModel}), which in general are based on the assumption that a codon can only mutate in one position per step, so e.g. a change from codon $AAA$ to codon $CCC$ would not be possible in one step, while a change from codon $AAA$ to $CAA$ would be possible. In the following, however, we disregard these codon models and suppose that any change from one codon to another codon is possible in one step and costs 1 unit regardless of whether one, two or three positions change. This is motivated by the fact that it is our aim to generalize the approach presented in \citep{Steel-Penny}, where changes between all tuples are allowed, too. Moreover, as we will show in Section \ref{Previous results}, the alternative model, which would assign a higher cost to the change from $AAA$ to $CCC$ than to $CAA$, can easily be traced back to the single-site case and is therefore of less mathematical interest.

After showing that MP on $k$-tuple-site data can be statistically inconsistent, we show that there exists no equivalence between the inconsistency of MP on single-site data and the statistical inconsistency on $k$-tuple-site data for five sequences. In particular, we give representative examples with edge lengths, where MP is statistically consistent on $k$-tuple-site data, but statistically inconsistent on single-site data and vice versa. 
Furthermore, we also compare our results for $2$-tuple- and $3$-tuple-site data. 
Here, we also give representative examples where MP is statistically consistent on $2$-tuple-site data, but statistically inconsistent on $3$-tuple-site data and vice versa. For all scenarios, additional to the explicit examples of inconsistency, we also compare the sizes of the inconsistency zones and see that the area where MP is consistent gets slightly larger the longer the tuples become. 
Lastly we consider an example, where the statistical inconsistency of MP on single-site data implies its statistical inconsistency on $2$- and $3$-tuple-site data.
But before we can start to prove all these statements, we need to state some definitions and to recall some known results.

\subsection{Preliminaries} 
In this section we introduce some fundamental definitions and notations concerning phylogenetic trees and MP. Afterwards, we recapitulate some previous results for MP on $k$-tuple-site data. 
\subsubsection{Basic definitions}\label{definitions}
Recall that a \emph{phylogenetic $X$-tree} $T$ is a tree $T=(V,E)$ with vertex set $V$ and edge set $E\subseteq \{e=\{u,v\} : u,v \in V\}$, where every leaf is bijectively labeled by an element of the taxon set $X = \{1,...,n\}$ and where all inner vertices have degree at least $3$ and the leaves have degree $1$. If the inner vertices all have degree exactly $3$, the phylogenetic $X$-tree is called \emph{binary}. A \emph{rooted phylogenetic $X$-tree} is a phylogenetic $X$-tree where one inner vertex is set to be the root (and thus gives the evolutionary relationships a direction). Note that in the literature the root node is often required to be a vertex of degree $2$, while all other inner vertices still are required to have degree at least $3$. However, in the present manuscript we do not require the root to be a degree-$2$ vertex. If a tree has no specified root node, it is often referred to as an unrooted tree. Throughout this work, we mean unrooted binary phylogenetic $X$-trees when we refer to trees and speak of rooted trees, whenever we consider rooted binary phylogenetic $X$-trees.
Furthermore, recall that a \emph{character} on $X$ is a function $f: X \rightarrow \mathcal{A}$ for some set $\mathcal{A} = \{c_1, c_2,..., c_r\}$ of $r$ character states ($r \in \mathbb{N}^{+}$). The set of character states $\mathcal{A}$ is sometimes also called \emph{alphabet}. One typical set of character states is the DNA alphabet \{$A,C,G,T$\}.
An \emph{extension} of a character $f$ is a map $g: V \to \mathcal{A}$ such that $g(i) = f(i)$ for all $i \in X$.
For a phylogenetic tree $T=(V,E)$ we call 
$ch(g,T):= |\{ \{u,v\} \in E : g(u) \neq g(v) \}|$
the \emph{changing number} of $g$ on $T$. 
Thus, the changing number counts the number of edges $\{u,v\}$ of $T$, where $u$ and $v$ are labeled differently by $g$. 
An \emph{alignment} $D \coloneqq f_1 f_2...f_k$ is a sequence of characters
and the \emph{parsimony score} of an alignment $D=f_1 ... f_k $ on a tree $T$ is defined as
\begin{align} \label{ps_score}
l(D,T)= \sum_{i=1}^{k} \; \min_{g_i} \; ch(g_i,T),
\end{align} 
where the minimum is taken over all extensions $g_i$ of $f_i$.
Then, a \emph{Maximum Parsimony tree}, or MP tree for short, $T$ for an alignment $D$ is given by
\begin{align*}
T = \argmin_{T' \in \mathcal{T}\,}~l(D,T'),
\end{align*}
where the minimum is taken over the set $\mathcal{T}$ of all phylogenetic $X$-trees. 
Please note that MP trees are not necessarily unique. Now, we consider an example for the calculation of the parsimony score of a character on a tree. In Figure \ref{example} we calculate the parsimony score of the two characters $f_1 = AAACC$ and $f_2= ACCAA$ on tree $T$. We already labeled the inner vertices by an extension that minimizes the changing number. We easily see that $l(f_1,T) = 1$ and $l(f_2,T) = 2$. \\
\setlength{\unitlength}{1847sp}
\begin{figure*}[t]
\begin{center}
\includegraphics[scale=0.45]{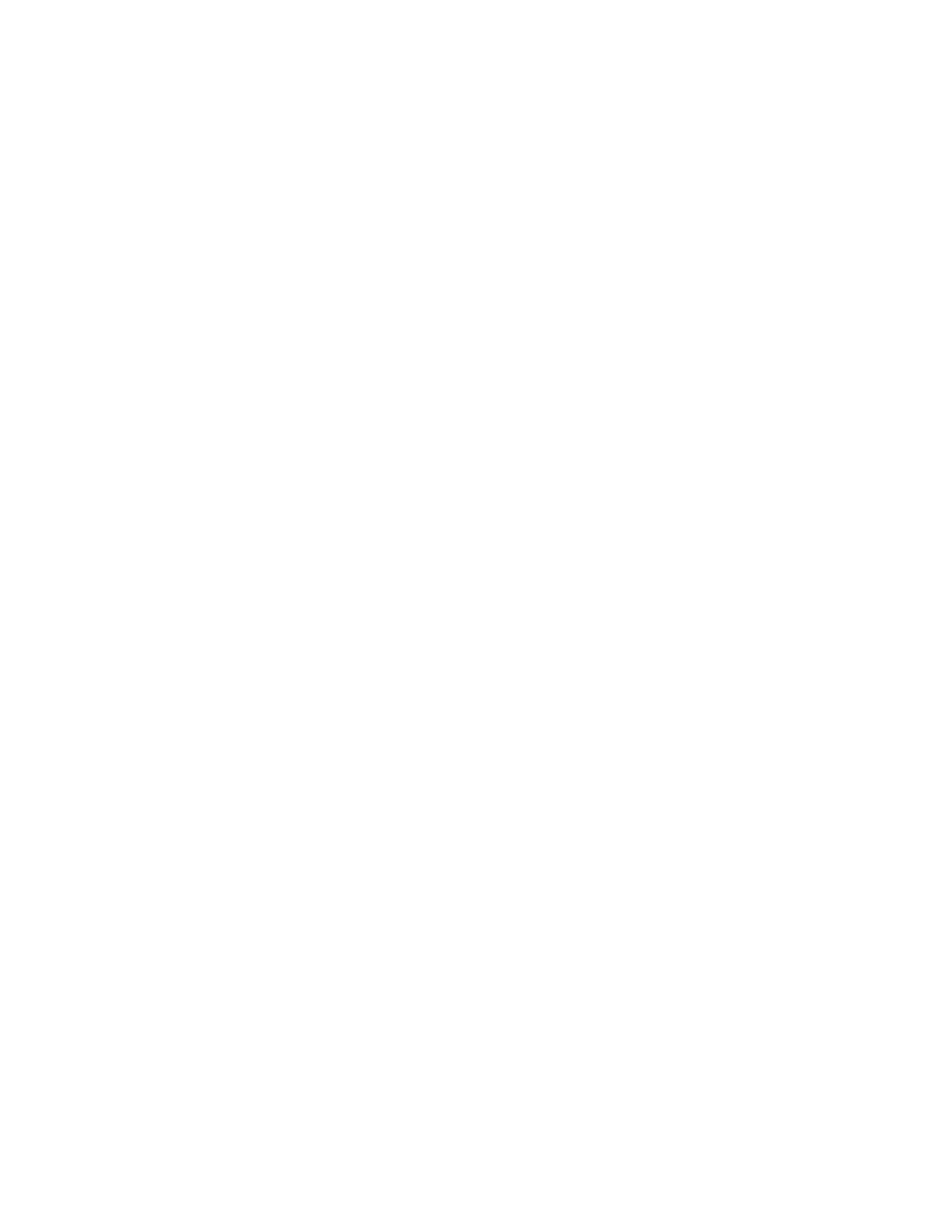}
\includegraphics[scale=0.45]{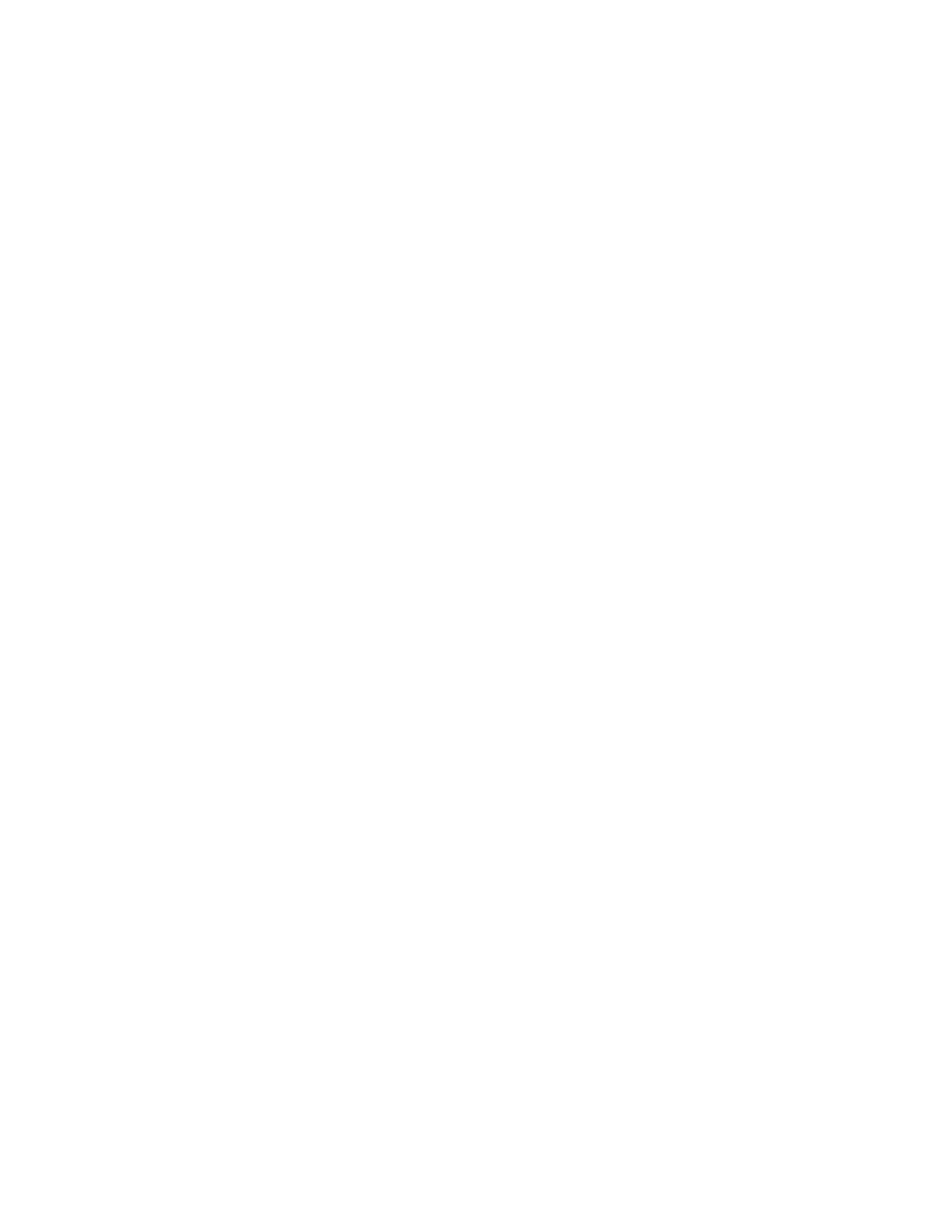}
\end{center}
\caption{Parsimony scores of the characters $f_1=AAACC$ and $f_2= ACCAA$ and the tuple ($f_1 f_2$) on the tree $T$. Here, we already have depicted a most parsimonious extension for each character, so we just have to calculate the changing number. Each dashed line shows a change/substitution on the edge. So we get $l(f_1,T) = 1$ and $l(f_2,T) = 2$ by counting the changes in the tree.} 
\label{example}
\end{figure*}

Note that in order to calculate the parsimony score of a character $f$ on a tree $T$ according to its definition (cf. Equation \eqref{ps_score}), all possible extensions $g$ of $f$ have to be considered. However, there exist efficient algorithms to calculate the parsimony score of a character on a tree, e.g. the Fitch-algorithm \cite{Fitch} for binary trees or the Fitch-Hartigan-algorithm \cite{Fitch-Hartigan} for general trees. Furthermore, for a character or a tuple that employs $r$ distinct states, the parsimony score has to be at least $r-1$ \cite{Phylogenetics}. Thus, for character $f_1$ it is immediately clear that the extension depicted in Figure \ref{example} are optimal. For $f_2$ it can be verified that there exists no extension requiring fewer than two changes, e.g. by enumerating all possible extensions or by using the Fitch algorithm \cite{Fitch}.

For the MP criterion we distinguish between informative and non-informative characters. A character $f$  on $X$ is called \emph{non-informative} if $l(f,T_1) = l(f,T_2)$ holds for all phylogenetic $X$-trees $T_1$ and $T_2$. Otherwise, the character is called \emph{informative}. Roughly speaking, this means that an informative character distinguishes between different trees, whereas a non-informative one has no such preference. It is known that a character $f$ is informative if and only if at least two states occur more than once in $f$ (cf. \citep{Gene, Kelk, Bandelt}).

Since we want to prove the statistical inconsistency of $k$-tuple-site data, we now have to define what $k$-tuples are and how we can use them in phylogenetic tree reconstruction.

A \emph{$k$-tuple} $(f_1 \ldots f_k)$ is simply a sequence of $k$ successive characters $f_1, \ldots, f_k$ in an alignment. 
If we consider $k$-tuples of characters we also speak of \emph{$k$-tuple-site data}, whereas we speak of \emph{single-site data} if we consider individual characters.
An example for the transformation from single-site data to $2$-tuple-site data can be seen in Figure \ref{Tuple}.
Please note that a 1-tuple is just a character. 
Now we need to define how to calculate the parsimony score of a $k$-tuple. We can consider a $k$-tuple $(f_1 \ldots f_k)$ of characters as a matrix with $k$ columns. The rows of the matrix can then be used as new character states, where a character associated with a $k$-tuple is defined as a function from $X$ to $\mathcal{A}^k \coloneqq \underbrace{\mathcal{A} \times \mathcal{A} \times \ldots \times \mathcal{A}}_{k \text{ times} }$. The parsimony score of a character associated with a $k$-tuple $(f_1 \ldots f_k)$ can then be calculated according to its definition (Equation \eqref{ps_score}). Note, however, that this is different to the calculation of the parsimony score of an alignment $f_1 \ldots f_k$, where we consider each character $f_i$ individually and sum up their respective parsimony scores. We consider the characters $f_1=AAACC$ and $f_2=ACCAA$ shown in Figure \ref{example} as an example. The parsimony scores of these characters on tree $T$ are $l(f_1,T) = 1$ and $l(f_2,T) = 2$. However, the parsimony score of the tuple $(f_1 f_2)$ on tree $T$ is $l((f_1 f_2),T) = 2$, see Figure \ref{exampleTuples}.

\begin{figure*}[t]
\begin{center}
\includegraphics[scale=0.45]{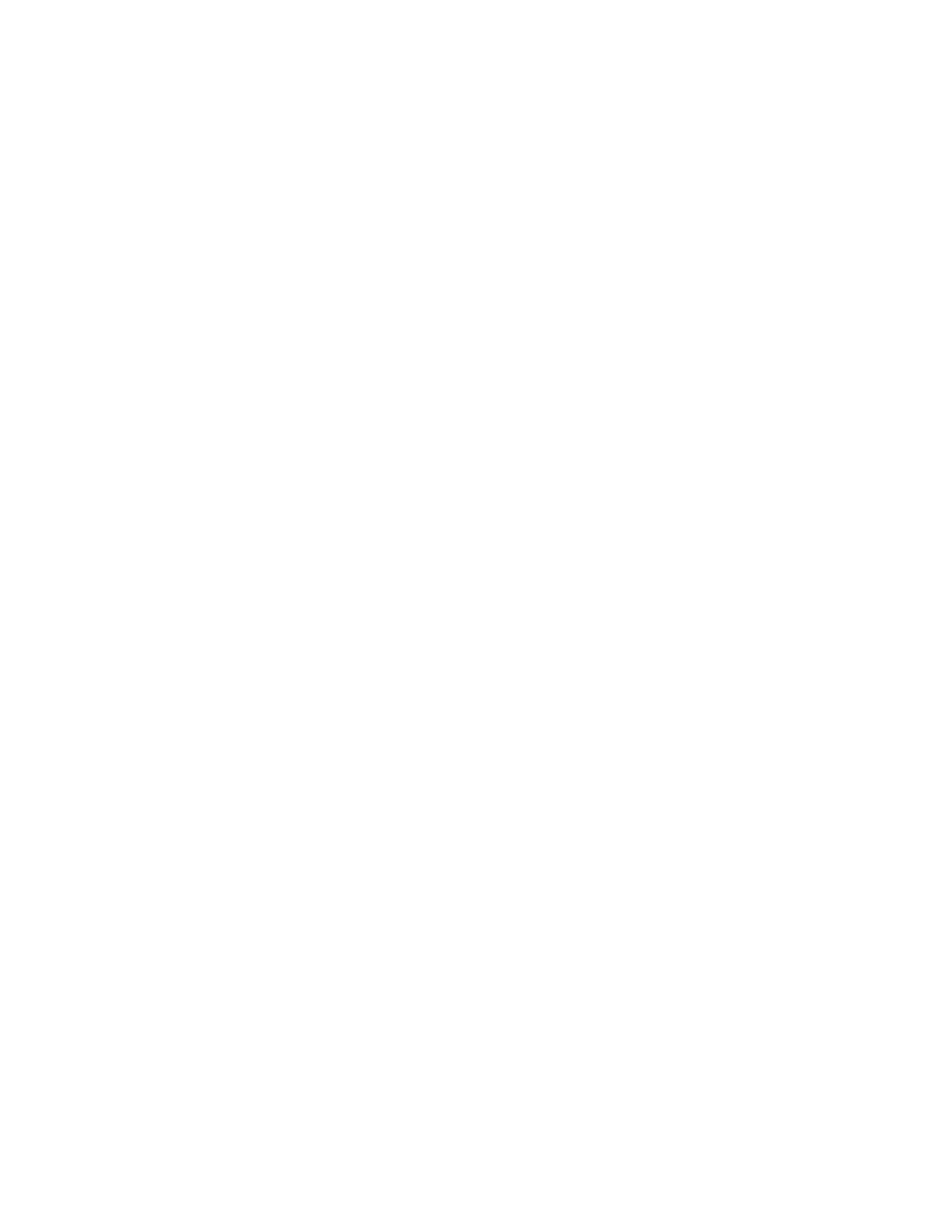}
\end{center}
\caption{Parsimony score of the tuple $(f_1f_2)$ with the characters $f_1=AAACC$ and $f_2= ACCAA$ on the tree $T$. Here, we already have depicted a most parsimonious extension, so we just have to calculate the changing number. Each dashed line shows a change/substitution on the respective edge. So, the parsimony score is  $l((f_1 f_2),T) = 2$.}
\label{exampleTuples}
\end{figure*}

Please also note that for an alphabet $\mathcal{A}$ with $r$ elements, $\mathcal{A}^k$ will contain $r^k$ elements.  Consider for example the DNA alphabet with the character states $\{A,C,G,T\}$, i.e.
we have $r=4$. The corresponding alphabet for $2$-tuple-site data is $\{AA,AC,AG,AT,CA,CC,CG,CT,GA,GC,GG,GT,TA,TC,TG,TT\}$ and we have $4^2=16$ different character states. The number of elements in the alphabet for $k$-tuple-site data grows exponentially with $k$, e.g. for $k=3$ the alphabet has 64 elements and for $k=4$ already 256. 

\begin{figure}[h]
\centering
\parbox{1.2in}{
\begin{tabular}{cccc}
1: $A$ & $A$ & $A$ & $A$  \\
 2: $G$ & $A$ & $A$ & $T$   \\
 3: $C$ & $G$ & $A$ & $T$ \\
 4: $A$ & $A$ & $C$ & $T$\\
 5: $G$ & $G$ & $C$ & $T$ 
\end{tabular}
}
\qquad
\begin{minipage}{0.1in}%
$\Rightarrow$
\end{minipage}
\qquad
\begin{minipage}{1.2in}%
\begin{tabular}{cc}
 1: $AA$ & $AA$   \\
 2: $GA$ & $AT$  \\
 3: $CG$ & $AT$  \\
 4: $AA$ & $CT$ \\
 5: $GG$ & $CT$  
\end{tabular}
\end{minipage}
\caption{An alignment with four characters on the alphabet $\{A,C,G,T\}$ is transformed into an alignment with two $2$-tuples on the alphabet $\{AA,AC,AG,AT,CA,CC,CG,CT,GA,GC,GG,GT,TA,TC,TG,TT\}.$}
\label{Tuple}
\end{figure} 
Next, we need to model how characters  evolve on a tree and therefore introduce the fully symmetric $r$-state model \cite{Nr-model}, also known as $N_r$-model. Consider a phylogenetic $X$-tree $T$ arbitrarily rooted at one of its inner vertices. In the $N_r$-model the root is assigned a state which is chosen uniformly at random from the alphabet under consideration.
The state then evolves along the tree (away from the root) as follows. Consider an edge $e=\{u,v\}$ in the tree, where $u$ is closer to the root than $v$. For such an edge we define $\mathsf{p}_e=P(v=c_i|u=c_j)$ for all $c_i, c_j$ with $c_i \neq c_j$. Thus, $\mathsf{p}_e$ is the probability of a change from state $c_j$ to state $c_i$ on edge $e$.  These probabilities are equal for all combinations of distinct $c_i$ and $c_j$, but can be different on each edge. With $\mathsf{q}_e$ we denote the probability that no substitution occurs on edge $e$, i.e. $\mathsf{q}_e = P(v = c_i | u = c_i)$. In the $N_r$-model, we have $0 \leq \mathsf{p}_e \leq \frac{1}{r}$ for all $e$ in $E$, and $(r-1)\mathsf{p}_e + \mathsf{q}_e = 1$. 
Note that the $N_4$-model is also often referred to as the Jukes-Cantor-model in biology \cite{Jukes-Cantor}. 
If we have a tree with substitution probabilities under the $N_r$-model we will declare it with $(T,\theta_T)$. 
$\theta_T \in \mathbb{R}^{2n-3}$ is simply a vector which contains the substitution probabilities $\mathsf{p_e}$ assigned to the edges of $T$ under the $N_r$-model, when $n$ is the number of leaves of $T$. Moreover, if all characters are independent and evolve under the $N_r$-model with the same probabilities (i.e. if the characters are independent and identically distributed), we refer to the model as the i.i.d. $N_r$-model.
We can calculate the probability of a character $f$ evolving on tree $(T=(V,E),\theta_T))$ as follows: First of all, the i.i.d. $N_r$-model assumes a uniform root state distribution, i.e. each of the $r$ character states is equally likely at the root. This leads to a factor of $\frac{1}{r}$. This factor then has to be multiplied with the sum over all possible extensions $g$ of character $f$ weighted by their respective probabilities. This leads to the following expression:
\begin{align*}
P&(f|(T,\theta_T))  = \frac{1}{r} \; \sum_{g \in G(f)}P(g | (T,\theta_T)) \\
& = \frac{1}{r} \; \sum_{g \in G(f)} \; \;  \prod_{\overset{e = \{u,v\} \in E:}{g(u)\neq g(v)}} \mathsf{p}_e \cdot \prod_{\overset{e = \{u,v\} \in E:}{g(u)=g(v)}} \mathsf{q}_e,
\end{align*}
where $G(f)$ is the set of all extensions of $f$.

\begin{example}\label{Ex1}
Consider the character $f_1 = AAABB$. We now calculate the probability of $f_1$ evolving under the i.i.d. $N_2$-model with alphabet $\{A,B\}$ on tree $(T_1,\theta_{T_1})$ depicted in Figure \ref{T1_ex1}, where the edges are labeled with the associated substitution probabilities. 
Now we proceed as follows: First, we have to choose a root state with probability 1/2 (for example we can choose $A$). Then, we have to take into account all possible extensions, i.e. all ways of assigning states to the inner vertices $u$, $v$ and $\rho$. By way of example, we consider the extension $g_1$ of $f_1$ where $\rho$ is assigned $A$ and $u$ and $v$ are labeled with $B$ and calculate its probability:
\begin{align*}
 P(g_1, (T_1,\theta_{T_1})) &= P(f_1=AAABB, \rho=A, u=B, v=B~|~ (T_1,\theta_{T_1}))\\  &=  \frac{1}{2}\cdot \mathsf{q}_{\{\rho,1\}} \cdot \mathsf{q_{\{\rho,2\}}} \cdot \mathsf{p_{\{\rho,u\}}} \cdot \mathsf{p_{\{u,3\}}} \cdot \mathsf{q_{\{u,v\}}} \cdot \mathsf{q_{\{v,4\}}} \cdot \mathsf{q_{\{v,5\}}} \\
 &= \frac{1}{2} \cdot (1-p) \cdot (1-q) \cdot q \cdot q \cdot (1-q) \cdot (1-p) \cdot (1-q).
\end{align*}
In the same way we can calculate the probabilities for all extensions of $f_1$. By summing up over all extensions of $f_1$ we derive the following probability for character $f_1$:
\begin{align*}
P(f_1|(T_1,\theta_{T_1})) = \frac{q}{2} - \frac{p q}{2} - \frac{3 q^2}{2} + \frac{3 p q^2}{2} + \frac{3 q^3}{2} - p q^3 - \frac{q^4}{2}.
\end{align*}

\setlength{\unitlength}{2567sp}
\begin{figure}[h]
\begin{center}
\begin{picture}(3330,3380)(3586,-4976)
\put(4551,-4061){\makebox(0,0)[lb]{\smash{{\SetFigFont{12}{14.4}{\rmdefault}{\mddefault}{\updefault}{\color[rgb]{0,0,0}$q$}%
}}}}
\thinlines
{\color[rgb]{0,0,0}\put(6001,-3661){\line( 1,-1){600}}
}%
{\color[rgb]{0,0,0}\put(5401,-3661){\line( 0,-1){600}}
}%
{\color[rgb]{0,0,0}\put(4801,-3661){\line(-1,-1){600}}
}%
{\color[rgb]{0,0,0}\put(3901,-2161){\line( 3,-5){900}}
}%
{\color[rgb]{0,0,0}\put(4801,-3661){\line( 1, 0){1200}}
\put(6001,-3661){\line( 3, 5){900}}
\put(3601,-3561){\makebox(0,0)[lb]{\smash{{\SetFigFont{15}{14.4}{\rmdefault}{\mddefault}{\updefault}{\color[rgb]{0,0,0}$T_1:$}%
}}}}
\put(3601,-2061){\makebox(0,0)[lb]{\smash{{\SetFigFont{12}{14.4}{\rmdefault}{\mddefault}{\updefault}{\color[rgb]{0,0,0}1: $A$}%
}}}}
\put(6901,-2061){\makebox(0,0)[lb]{\smash{{\SetFigFont{12}{14.4}{\rmdefault}{\mddefault}{\updefault}{\color[rgb]{0,0,0} 4: $B$}%
}}}}
\put(6601,-4561){\makebox(0,0)[lb]{\smash{{\SetFigFont{12}{14.4}{\rmdefault}{\mddefault}{\updefault}{\color[rgb]{0,0,0}5: $B$}%
}}}}
\put(5201,-4561){\makebox(0,0)[lb]{\smash{{\SetFigFont{12}{14.4}{\rmdefault}{\mddefault}{\updefault}{\color[rgb]{0,0,0}3: $A$}%
}}}}
\put(3901,-4561){\makebox(0,0)[lb]{\smash{{\SetFigFont{12}{14.4}{\rmdefault}{\mddefault}{\updefault}{\color[rgb]{0,0,0} 2: $A$}%
}}}}
\put(4501,-3061){\makebox(0,0)[lb]{\smash{{\SetFigFont{12}{14.4}{\rmdefault}{\mddefault}{\updefault}{\color[rgb]{0,0,0}$p$}%
}}}}
\put(6501,-3061){\makebox(0,0)[lb]{\smash{{\SetFigFont{12}{14.4}{\rmdefault}{\mddefault}{\updefault}{\color[rgb]{0,0,0}$p$}%
}}}}
\put(5101,-3861){\makebox(0,0)[lb]{\smash{{\SetFigFont{12}{14.4}{\rmdefault}{\mddefault}{\updefault}{\color[rgb]{0,0,0}$q$}%
}}}}
\put(5701,-3861){\makebox(0,0)[lb]{\smash{{\SetFigFont{12}{14.4}{\rmdefault}{\mddefault}{\updefault}{\color[rgb]{0,0,0}$q$}%
}}}}
\put(5461,-4061){\makebox(0,0)[lb]{\smash{{\SetFigFont{12}{14.4}{\rmdefault}{\mddefault}{\updefault}{\color[rgb]{0,0,0}$q$}%
}}}}
\put(6491,-4061){\makebox(0,0)[lb]{\smash{{\SetFigFont{12}{14.4}{\rmdefault}{\mddefault}{\updefault}{\color[rgb]{0,0,0}$q$}%
}}}}
\put(4791,-3661){\circle*{150}}
\put(4791,-3461){\makebox(0,0)[lb]{\smash{{\SetFigFont{12}{14.4}{\rmdefault}{\mddefault}{\updefault}{\color[rgb]{0,0,0}$\rho$}%
}}}}
\put(5301,-3561){\makebox(0,0)[lb]{\smash{{\SetFigFont{12}{14.4}{\rmdefault}{\mddefault}{\updefault}{\color[rgb]{0,0,0}$u$}%
}}}}
\put(5901,-3561){\makebox(0,0)[lb]{\smash{{\SetFigFont{12}{14.4}{\rmdefault}{\mddefault}{\updefault}{\color[rgb]{0,0,0}$v$}%
}}}}
}%
\end{picture}%
\end{center}
\caption{Phylogenetic tree $(T_1,\theta_{T_1})$ and character $f_1 = AAABB$, where the edges are labeled with the substitution probabilities of $\theta_{T_1}$. For the $N_r$-model we arbitrarily choose the marked inner vertex as root $\rho$. The other two inner vertices are labeled with $u$ and $v$.}
\label{T1_ex1}
\end{figure}
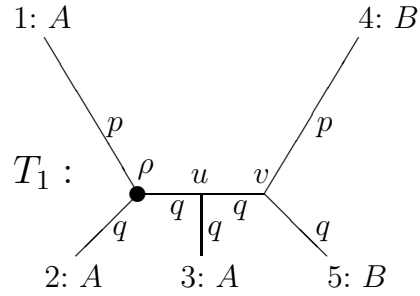

\end{example}

It can be proven that the induced probability distribution on the characters is not affected by the choice of the root position (recall that we consider trees arbitrarily rooted at one of their vertices) \citep{Felsenstein1981}. This property is referred to as time-reversibility of the $N_r$-model. 
Recall that we assume the characters to be independent and identically distributed. This implies that the probability that an alignment $f_1...f_k$ or a $k$-tuple $(f_1...f_k)$ evolves on tree $(T=(V,E),\theta_T))$ can simply be calculated as the product over all $P(f_i|(T,\theta_T))$. Also recall that $ l((f_1...f_k),T)$ denotes the parsimony score of a $k$-tuple $(f_1...f_k)$ on tree $T$.
Based on this knowledge we now consider the \emph{expected parsimony score} of a $k$-tuple of characters on a phylogenetic $X$-tree $T'$ that is not necessarily the generating tree $T$ (i.e. $T'$ need not be the tree on which the characters evolved) as 
\begin{align}
\mu_k(T'| (T,\theta_{T})) = \sum_{(f_1...f_k) \in F^k} l((f_1...f_k),T') \cdot \prod_{i=1}^k P(f_i|(T,\theta_T)). \label{expMT}
\end{align}
Here, $F=\mathcal{A}^n= \underbrace{\mathcal{A} \times \mathcal{A} \times ... \times \mathcal{A}}_{n~ \text{times}}$ (where $n=|X|$ equals the number of species/sequences under consideration) is the set of all characters on the alphabet $\mathcal{A}$. Then, $F^k= \underbrace{F \times F \times ... \times F}_{k~ \text{times}}$ is the set of all $k$-tuples of characters in $F$. Additionally, the \emph{expected MP tree} for $k$-tuple-site data is defined as $\argmin\limits_{T' \in \mathcal{T}} \mu_k(T'| (T,\theta_{T}))$, where $\mathcal{T}$ is the set of all phylogenetic $X$-trees.

\subsubsection{Previous results}
\label{Previous results}
We will now return to the statistical inconsistency of MP hinted at in the introduction.
A tree reconstruction method is called \emph{consistent} if the probability of it reconstructing the correct tree converges to certainty as the sequence length tends to infinity. The reconstructed tree is considered correct if it matches the generating tree up to the position of the root, since the root generally cannot be determined without additional assumptions (taken from \citep{Steel-Penny}).

We have already seen that MP is statistically inconsistent in the so-called Felsenstein zone \cite{Felsensteinzone}, where long edges may be incorrectly grouped together due to a phenomenon known as long branch attraction.

In the following we will analyze how applying MP to $k$-tuples of characters instead of single characters influences its statistical properties, in particular its statistical inconsistency.
Note that switching from single-site data to $k$-tuple-site data has two effects. 
On the one hand, the size of the alphabet increases (the size of the alphabet for $k$-tuple-site data is $r^k$ if the original alphabet contains $r$ elements).

On the other hand, by switching from characters to $k$-tuples, the amount of input data for MP decreases. 
For an alignment with $m$ characters, there will be just $\lceil \frac{m}{k} \rceil$ $k$-tuples, whereby the last tuple could be composed of fewer than $k$ columns.

Moreover, note that in combining certain types of single characters we may also lose information. For instance, combining two informative characters may lead to a non-informative 2-tuple. This can be seen in Figure \ref{Tuple}. The first two characters are informative, because the character states $A$ and $G$ occur more than once in both characters. Considering these two informative characters as a $2$-tuple, however, is non-informative, because only the character state AA occurs more than once in the $2$-tuple. On the other hand, certain combinations of informative and non-informative characters may result in an informative $2$-tuple. Again, we see an example in Figure \ref{Tuple}. The third character is informative, whereas the fourth character is non-informative. The $2$-tuple of both characters is informative. It can also easily be seen that a $k$-tuple can only be informative if at least one character that is contained in the $k$-tuple is informative.

Thus, MP applied to $k$-tuples of characters may lead to different results than MP applied to single characters. However, at least for four sequences, MP applied to $k$-tuples of characters will be statistically consistent if and only if it is consistent for the original single characters. 

\begin{theorem}[\cite{Steel-Penny}] \label{SP} 
For four sequences and any i.i.d. model of sequence evolution, MP is statistically consistent on $k$-tuple-site data if and only if MP is statistically consistent on single-site data. 
\end{theorem}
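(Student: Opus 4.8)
The plan is to reduce the statement to one elementary fact about $k$-th powers, exploiting that there are only three unrooted binary trees on four leaves. Denote them $T_1,T_2,T_3$, corresponding to the splits $\sigma_1=12|34$, $\sigma_2=13|24$, $\sigma_3=14|23$. The first step is a combinatorial lemma valid over an arbitrary alphabet $\mathcal{B}$, hence applicable both to ordinary characters (over $\mathcal{A}$) and to $k$-tuples regarded as characters over $\mathcal{B}=\mathcal{A}^k$: a character $h$ on the four-taxon set $X$ is informative precisely when $X$ splits into pairs $\{a,b\},\{c,d\}$ with $h(a)=h(b)\neq h(c)=h(d)$, and in that case $l(h,T_i)=1$ for the unique tree $T_i$ realizing the split $ab|cd$ while $l(h,T_j)=2$ for the remaining two trees; every non-informative character has one and the same parsimony score on all three quartet trees. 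This follows from a short case analysis via Fitch's algorithm \cite{Fitch}, and -- crucially -- the classification does not depend on the size of the alphabet, so it carries over verbatim to $k$-tuples.

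Next I would pass to expectations. Summing the lemma over all characters (resp.\ all $k$-tuples), the result depends on the scoring tree only through the patterns that support its split, giving
\begin{align*}
\mu_1(T_i\mid (T,\theta_T)) = D - p_i, \qquad \mu_k(T_i\mid (T,\theta_T)) = D_k - P_i^{(k)},
\end{align*}
where $D$ and $D_k$ are independent of $i$ (they absorb the contribution of all non-informative patterns and the value $2$ that each informative pattern contributes to the two quartets it does not support), $p_i$ is the probability that one random character is informative and supports $\sigma_i$, and $P_i^{(k)}$ is the analogous probability for one random $k$-tuple. Since the characters are i.i.d.\ (and therefore so are the consecutive non-overlapping $k$-tuples), the law of large numbers shows that the per-site (resp.\ per-$k$-tuple) parsimony score of each quartet tree converges almost surely to the corresponding expectation. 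Hence Maximum Parsimony is consistent on single-site data for generating tree $T_{i_0}$ exactly when $p_{i_0}>p_j$ for both $j\neq i_0$, and consistent on $k$-tuple-site data exactly when $P_{i_0}^{(k)}>P_j^{(k)}$ for both $j\neq i_0$ (a tie on either side destroys consistency). So it remains to compare the two maximizing indices.

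For that, put $\alpha_i=P\bigl(f(a)=f(b),\, f(c)=f(d)\bigr)$ for the pair partition defining $\sigma_i$, and $\gamma=P\bigl(f(1)=f(2)=f(3)=f(4)\bigr)$, all taken under the generating model for a single character. Then $p_i=\alpha_i-\gamma$. A $k$-tuple supports $\sigma_i$ iff in every one of its $k$ coordinates the leaves $a,b$ agree and the leaves $c,d$ agree, while the common $\{a,b\}$-value differs from the common $\{c,d\}$-value in at least one coordinate; by independence across coordinates this probability equals $\alpha_i^{k}-\gamma^{k}$. Because $x\mapsto x^{k}$ is strictly increasing on $[0,\infty)$ and $\gamma$ (resp.\ $\gamma^{k}$) is subtracted uniformly, we get $p_i>p_j\iff\alpha_i>\alpha_j\iff\alpha_i^{k}>\alpha_j^{k}\iff P_i^{(k)}>P_j^{(k)}$, and likewise for equalities. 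Thus the index maximizing $p_i$ coincides with the index maximizing $P_i^{(k)}$, which is exactly the asserted equivalence; note that nothing beyond i.i.d.\ evolution is used, matching the generality \enquote{any i.i.d.\ model} in the statement.

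I expect the only real subtlety to be the first step: making the four-taxon classification precise (which characters are informative, and their exact scores $1$ and $2$ on the three quartets) and observing that it is alphabet-independent, so that a $k$-tuple behaves exactly like a single character over $\mathcal{A}^k$. After that, the passage to expectations is bookkeeping, and the heart of the matter is the one-line observation that replacing a character by a $k$-tuple sends $\alpha_i\mapsto\alpha_i^{k}$ and the common correction $\gamma\mapsto\gamma^{k}$, an operation that visibly preserves which split is the most probable.
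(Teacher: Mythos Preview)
The paper does not actually prove this theorem; it is quoted as a known result from Steel and Penny \cite{Steel-Penny} and used only as background for the subsequent five-taxon analysis. So there is no in-paper proof to compare against.

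That said, your argument is correct and is essentially the standard one. The key observations you make are exactly the right ones: on four taxa every informative pattern over any alphabet is of $xxyy$-type and saves precisely one step on the quartet tree whose split it matches; hence the expected parsimony score on $T_i$ has the form $D-p_i$ (resp.\ $D_k-P_i^{(k)}$), and consistency reduces to identifying the strict maximizer among $p_1,p_2,p_3$ (resp.\ $P_1^{(k)},P_2^{(k)},P_3^{(k)}$). Your computation $P_i^{(k)}=\alpha_i^{k}-\gamma^{k}$ from independence across coordinates is the crux, and the monotonicity of $x\mapsto x^{k}$ then transfers the ordering of the $\alpha_i$ (and hence of the $p_i=\alpha_i-\gamma$) verbatim to the $P_i^{(k)}$, ties included. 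Nothing beyond the i.i.d.\ assumption is used, matching the generality of the statement.

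One small remark: your treatment of ties (\enquote{a tie on either side destroys consistency}) is the right reading of the definition used here, and your equivalence chain handles equalities as well as strict inequalities, so the biconditional is airtight.
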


As we know that MP is statistically inconsistent on single-site data \cite{Felsensteinzone}, Theorem \ref{SP} implies that MP is also statistically inconsistent on $k$-tuple-site data in the special case of four sequences. This result holds for all $k$ and for all alphabets. However, as it only considers four sequences and thus four species, the main motivation for this manuscript is to find out if such an equivalence also holds for more than four sequences and, if not, if MP is nevertheless statistically inconsistent. As the result of Theorem \ref{SP} only holds for four sequences and as we want to find out if it can be generalized, we now turn our attention to five taxa. 

\begin{remark} 
As our motivation is to generalize the results of \cite{Steel-Penny}, we suppose that changing a $k$-tuple into another $k$-tuple is one change (i.e. \enquote{costs} one unit) regardless of whether only one position of the $k$-tuple changes or all of them. This is exactly the same approach as in \citep{Steel-Penny}, but might seem biologically counter-intuitive at first glance. In fact, if for example $3$-tuple site data over the DNA alphabet are considered, i.e. DNA triplets or DNA codons, most codon models (e.g. \citep{CodonModel}) assume that DNA codons can only change in one position per step. Thus, while we say that the cost of changing from $AAC$ to $CCC$ costs 1 unit, most models would say that the costs are in fact 2 units, because 2 positions change. A way to include the information of how many positions have to change in order to go from one $k$-tuple to another $k$-tuple would be to use a so-called \emph{weighted parsimony} approach (cf. \citep{Sankoff1975}). Here, we could set the costs of going from one $k$-tuple $k_1$ to another $k$-tuple $k_2$ to be the so-called Hamming distance $d_H(k_1,k_2)$ between $k_1$ and $k_2$, i.e. the number if positions where $k_1$ and $k_2$ are different from each other, e.g. $d_H(AAC, CCC)=2$. We will not go into the details of weighted parsimony here, but it can easily be shown that using $k$-tuple-site data and setting the cost of a change from one $k$-tuple to another $k$-tuple to be the Hamming distance between them reduces to the standard approach of using single-site data (i.e. treating each tuple basically like an alignment), which is already well understood.
This is the reason, why we -- following Steel and Penny \citep{Steel-Penny} -- assume any change of one $k$-tuple into another $k$-tuple to be of unit costs.
\end{remark}

\section{Results}
We now analyze whether MP is statistically inconsistent on $k$-tuple-site data. First, we consider $2$-tuple-site data for alphabets with two and four elements. Afterwards, we also consider $3$-tuple-site data for these two types of alphabets. 

\subsection{Statistical inconsistency for $2$-tuple-site data and two character states}\label{Results1}
We start with stating the statistical inconsistency of MP on $2$-tuple-site data and two character states.
\begin{theorem}\label{Theorem2}
For five sequences, two character states and the i.i.d. $N_2$-model, MP is statistically inconsistent on $2$-tuple-site data. 
\end{theorem}

\begin{proof} We construct an explicit example of a tree which generates data for which MP will be inconsistent. Consider tree $(T_1, \theta_{T_1})$ on five taxa depicted in Figure \ref{T1}. $T_1$ contains two long edges (labeled with $p$) and five short edges (labeled with $q$). We assume $(T_1, \theta_{T_1})$ to be the generating tree of a set of characters evolving under the i.i.d. $N_2$-model. 

\setlength{\unitlength}{2067sp}
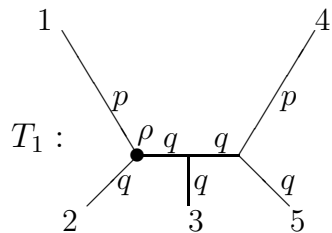
\begin{figure}[h]
\begin{center}
\begin{picture}(3330,3380)(3586,-4976)
\put(4551,-4061){\makebox(0,0)[lb]{\smash{{\SetFigFont{12}{14.4}{\rmdefault}{\mddefault}{\updefault}{\color[rgb]{0,0,0}$q$}%
}}}}
\thinlines
{\color[rgb]{0,0,0}\put(6001,-3661){\line( 1,-1){600}}
}%
{\color[rgb]{0,0,0}\put(5401,-3661){\line( 0,-1){600}}
}%
{\color[rgb]{0,0,0}\put(4801,-3661){\line(-1,-1){600}}
}%
{\color[rgb]{0,0,0}\put(3901,-2161){\line( 3,-5){900}}
}%
{\color[rgb]{0,0,0}\put(4801,-3661){\line( 1, 0){1200}}
\put(6001,-3661){\line( 3, 5){900}}
\put(3301,-3561){\makebox(0,0)[lb]{\smash{{\SetFigFont{13}{14.4}{\rmdefault}{\mddefault}{\updefault}{\color[rgb]{0,0,0}$T_1:$}%
}}}}
\put(3601,-2161){\makebox(0,0)[lb]{\smash{{\SetFigFont{12}{14.4}{\rmdefault}{\mddefault}{\updefault}{\color[rgb]{0,0,0}1}%
}}}}
\put(6901,-2161){\makebox(0,0)[lb]{\smash{{\SetFigFont{12}{14.4}{\rmdefault}{\mddefault}{\updefault}{\color[rgb]{0,0,0} 4}%
}}}}
\put(6601,-4561){\makebox(0,0)[lb]{\smash{{\SetFigFont{12}{14.4}{\rmdefault}{\mddefault}{\updefault}{\color[rgb]{0,0,0}5}%
}}}}
\put(5401,-4561){\makebox(0,0)[lb]{\smash{{\SetFigFont{12}{14.4}{\rmdefault}{\mddefault}{\updefault}{\color[rgb]{0,0,0}3}%
}}}}
\put(3901,-4561){\makebox(0,0)[lb]{\smash{{\SetFigFont{12}{14.4}{\rmdefault}{\mddefault}{\updefault}{\color[rgb]{0,0,0}  2}%
}}}}
\put(4501,-3061){\makebox(0,0)[lb]{\smash{{\SetFigFont{12}{14.4}{\rmdefault}{\mddefault}{\updefault}{\color[rgb]{0,0,0}$p$}%
}}}}
\put(6501,-3061){\makebox(0,0)[lb]{\smash{{\SetFigFont{12}{14.4}{\rmdefault}{\mddefault}{\updefault}{\color[rgb]{0,0,0}$p$}%
}}}}
\put(5101,-3561){\makebox(0,0)[lb]{\smash{{\SetFigFont{12}{14.4}{\rmdefault}{\mddefault}{\updefault}{\color[rgb]{0,0,0}$q$}%
}}}}
\put(5701,-3561){\makebox(0,0)[lb]{\smash{{\SetFigFont{12}{14.4}{\rmdefault}{\mddefault}{\updefault}{\color[rgb]{0,0,0}$q$}%
}}}}
\put(5461,-4061){\makebox(0,0)[lb]{\smash{{\SetFigFont{12}{14.4}{\rmdefault}{\mddefault}{\updefault}{\color[rgb]{0,0,0}$q$}%
}}}}
\put(6491,-4061){\makebox(0,0)[lb]{\smash{{\SetFigFont{12}{14.4}{\rmdefault}{\mddefault}{\updefault}{\color[rgb]{0,0,0}$q$}%
}}}}
\put(4791,-3661){\circle*{150}}
\put(4791,-3461){\makebox(0,0)[lb]{\smash{{\SetFigFont{12}{14.4}{\rmdefault}{\mddefault}{\updefault}{\color[rgb]{0,0,0}$\rho$}%
}}}}
}%
\end{picture}%
\end{center}
\caption{Phylogenetic tree $(T_1,\theta_{T_1})$, where the edges are labeled with the substitution probabilities of $\theta_{T_1}$. For the $N_r$-model we arbitrarily choose the marked inner vertex as root $\rho$.}
\label{T1}
\end{figure}

In order to show that MP is statistically inconsistent on $2$-tuple-site data, we will show that there exist values of $p$ and $q$ such that $T_1$ is not the expected MP tree if MP is applied to $2$-tuples of characters that evolved on $T_1$. Thus, we need to show that
\begin{align*} T_1 \neq \argmin_{T' \in \mathcal{T}} \mu_2(T'| (T_1,\theta_{T_1})), \end{align*}
where $\mathcal{T}$ is the set of all binary phylogenetic $X$-trees on five taxa (see Table \ref{Table-alltrees}).

\setlength{\unitlength}{3067sp}
\begin{figure}[h]
\begin{center}
\includegraphics[scale=0.5]{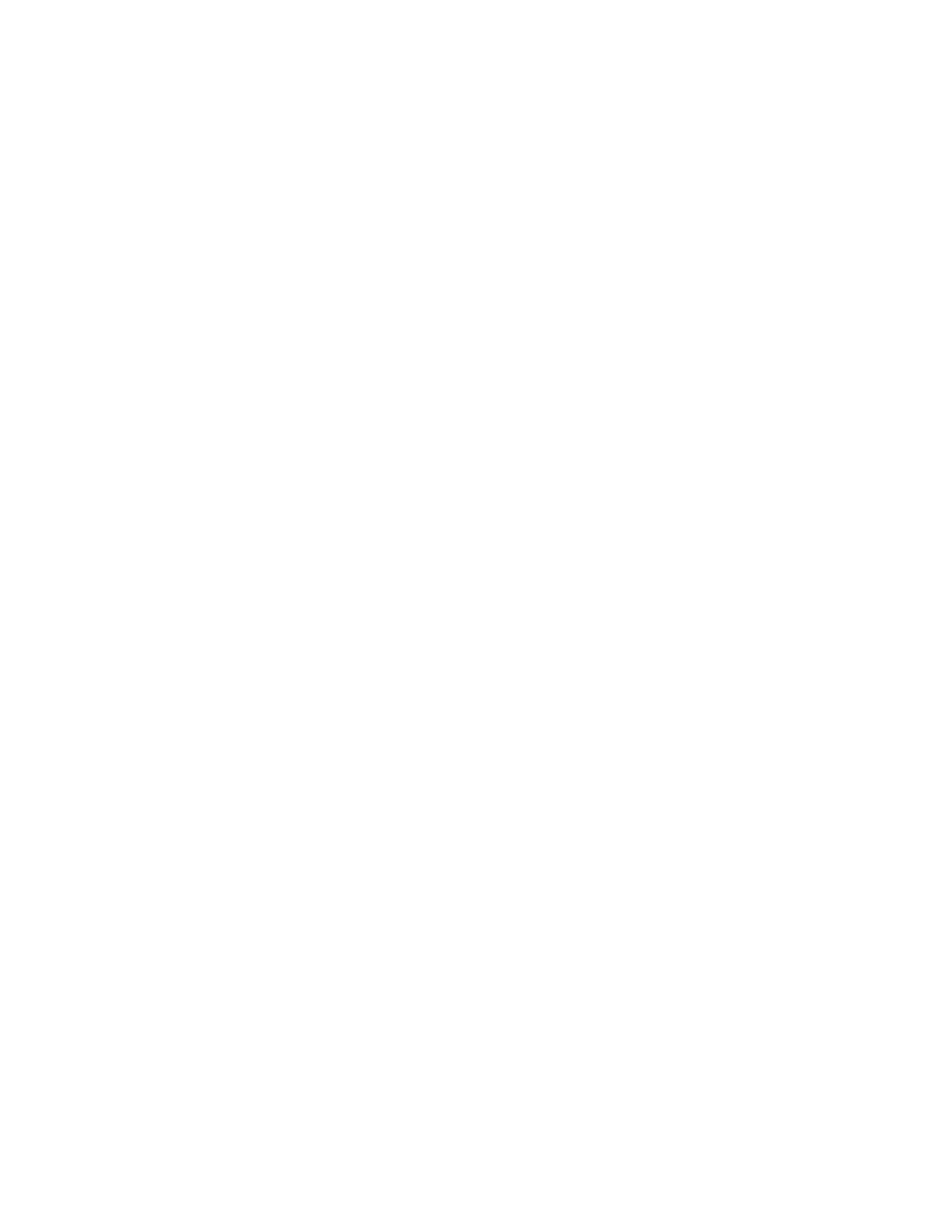}
\end{center}
\caption{Phylogenetic tree $(T_1,\theta_{T_1})$ and the $2$-tuples $(f_1 f_2)$ with $f_1 = AAABB$ and $f_2=ABBBB$, where the edges are labeled with the substitution probabilities of $\theta_{T_1}$. The inner vertices are assigned states according to one of the extensions (a most parsimonious one). Each dashed line shows a change/substitution on the edge.}
\label{T1_ex2}
\end{figure}

\setlength{\unitlength}{1567sp}
\begin{table*}[htbp]
\center
\begin{tabular}{|p{3cm}|p{3cm}|p{3cm}|}
\hline 
\begin{picture}(1030,1880)(3486,-4576)
\thinlines
{\color[rgb]{0,0,0}\put(6001,-3661){\line( 1,-1){600}}
}%
{\color[rgb]{0,0,0}\put(5401,-3661){\line( 0,-1){600}}
}%
{\color[rgb]{0,0,0}\put(4801,-3661){\line(-1,-1){600}}
}%
{\color[rgb]{0,0,0}\put(4201,-3081){\line( 1,-1){600}}
}%
{\color[rgb]{0,0,0}\put(4801,-3661){\line( 1, 0){1200}}
\put(6001,-3661){\line( 1, 1){600}}
\put(3901,-3161){\makebox(0,0)[lb]{\smash{{\SetFigFont{12}{14.4}{\rmdefault}{\mddefault}{\updefault}{\color[rgb]{0,0,0}1}%
}}}}
\put(6651,-3161){\makebox(0,0)[lb]{\smash{{\SetFigFont{12}{14.4}{\rmdefault}{\mddefault}{\updefault}{\color[rgb]{0,0,0} 4}%
}}}}
\put(6651,-4661){\makebox(0,0)[lb]{\smash{{\SetFigFont{12}{14.4}{\rmdefault}{\mddefault}{\updefault}{\color[rgb]{0,0,0}5}%
}}}}
\put(5301,-4661){\makebox(0,0)[lb]{\smash{{\SetFigFont{12}{14.4}{\rmdefault}{\mddefault}{\updefault}{\color[rgb]{0,0,0}3}%
}}}}
\put(3901,-4661){\makebox(0,0)[lb]{\smash{{\SetFigFont{12}{14.4}{\rmdefault}{\mddefault}{\updefault}{\color[rgb]{0,0,0} 2}%
}}}}
\put(3501,-3861){\makebox(0,0)[lb]{\smash{{\SetFigFont{12}{14.4}{\rmdefault}{\mddefault}{\updefault}{\color[rgb]{0,0,0} $T_1$:}%
}}}}
\put(4791,-3461){\makebox(0,0)[lb]{\smash{{\SetFigFont{12}{14.4}{\rmdefault}{\mddefault}{\updefault}
}}}}
}%
\end{picture}% 
&
\begin{picture}(1030,1880)(3486,-4576)
\thinlines
{\color[rgb]{0,0,0}\put(6001,-3661){\line( 1,-1){600}}
}%
{\color[rgb]{0,0,0}\put(5401,-3661){\line( 0,-1){600}}
}%
{\color[rgb]{0,0,0}\put(4801,-3661){\line(-1,-1){600}}
}%
{\color[rgb]{0,0,0}\put(4201,-3081){\line( 1,-1){600}}
}%
{\color[rgb]{0,0,0}\put(4801,-3661){\line( 1, 0){1200}}
\put(6001,-3661){\line( 1, 1){600}}
\put(3901,-3161){\makebox(0,0)[lb]{\smash{{\SetFigFont{12}{14.4}{\rmdefault}{\mddefault}{\updefault}{\color[rgb]{0,0,0}2}%
}}}}
\put(6651,-3161){\makebox(0,0)[lb]{\smash{{\SetFigFont{12}{14.4}{\rmdefault}{\mddefault}{\updefault}{\color[rgb]{0,0,0} 1}%
}}}}
\put(6651,-4661){\makebox(0,0)[lb]{\smash{{\SetFigFont{12}{14.4}{\rmdefault}{\mddefault}{\updefault}{\color[rgb]{0,0,0}5}%
}}}}
\put(5301,-4661){\makebox(0,0)[lb]{\smash{{\SetFigFont{12}{14.4}{\rmdefault}{\mddefault}{\updefault}{\color[rgb]{0,0,0}4}%
}}}}
\put(3901,-4661){\makebox(0,0)[lb]{\smash{{\SetFigFont{12}{14.4}{\rmdefault}{\mddefault}{\updefault}{\color[rgb]{0,0,0} 3}%
}}}}
\put(3501,-3861){\makebox(0,0)[lb]{\smash{{\SetFigFont{12}{14.4}{\rmdefault}{\mddefault}{\updefault}{\color[rgb]{0,0,0} $T_2$:}%
}}}}
\put(4791,-3461){\makebox(0,0)[lb]{\smash{{\SetFigFont{12}{14.4}{\rmdefault}{\mddefault}{\updefault}
}}}}
}%
\end{picture}%
&
\begin{picture}(1030,1880)(3486,-4576)
\thinlines
{\color[rgb]{0,0,0}\put(6001,-3661){\line( 1,-1){600}}
}%
{\color[rgb]{0,0,0}\put(5401,-3661){\line( 0,-1){600}}
}%
{\color[rgb]{0,0,0}\put(4801,-3661){\line(-1,-1){600}}
}%
{\color[rgb]{0,0,0}\put(4201,-3081){\line( 1,-1){600}}
}%
{\color[rgb]{0,0,0}\put(4801,-3661){\line( 1, 0){1200}}
\put(6001,-3661){\line( 1, 1){600}}
\put(3901,-3161){\makebox(0,0)[lb]{\smash{{\SetFigFont{12}{14.4}{\rmdefault}{\mddefault}{\updefault}{\color[rgb]{0,0,0}5}%
}}}}
\put(6651,-3161){\makebox(0,0)[lb]{\smash{{\SetFigFont{12}{14.4}{\rmdefault}{\mddefault}{\updefault}{\color[rgb]{0,0,0} 4}%
}}}}
\put(6651,-4661){\makebox(0,0)[lb]{\smash{{\SetFigFont{12}{14.4}{\rmdefault}{\mddefault}{\updefault}{\color[rgb]{0,0,0}1}%
}}}}
\put(5301,-4661){\makebox(0,0)[lb]{\smash{{\SetFigFont{12}{14.4}{\rmdefault}{\mddefault}{\updefault}{\color[rgb]{0,0,0}2}%
}}}}
\put(3901,-4661){\makebox(0,0)[lb]{\smash{{\SetFigFont{12}{14.4}{\rmdefault}{\mddefault}{\updefault}{\color[rgb]{0,0,0} 3}%
}}}}
\put(3501,-3861){\makebox(0,0)[lb]{\smash{{\SetFigFont{12}{14.4}{\rmdefault}{\mddefault}{\updefault}{\color[rgb]{0,0,0} $T_3$:}%
}}}}
\put(4791,-3461){\makebox(0,0)[lb]{\smash{{\SetFigFont{12}{14.4}{\rmdefault}{\mddefault}{\updefault}
}}}}
}%
\end{picture}%
\\
\hline
\begin{picture}(1030,1880)(3486,-4576)
\thinlines
{\color[rgb]{0,0,0}\put(6001,-3661){\line( 1,-1){600}}
}%
{\color[rgb]{0,0,0}\put(5401,-3661){\line( 0,-1){600}}
}%
{\color[rgb]{0,0,0}\put(4801,-3661){\line(-1,-1){600}}
}%
{\color[rgb]{0,0,0}\put(4201,-3081){\line( 1,-1){600}}
}%
{\color[rgb]{0,0,0}\put(4801,-3661){\line( 1, 0){1200}}
\put(6001,-3661){\line( 1, 1){600}}
\put(3901,-3161){\makebox(0,0)[lb]{\smash{{\SetFigFont{12}{14.4}{\rmdefault}{\mddefault}{\updefault}{\color[rgb]{0,0,0}2}%
}}}}
\put(6651,-3161){\makebox(0,0)[lb]{\smash{{\SetFigFont{12}{14.4}{\rmdefault}{\mddefault}{\updefault}{\color[rgb]{0,0,0} 4}%
}}}}
\put(6651,-4661){\makebox(0,0)[lb]{\smash{{\SetFigFont{12}{14.4}{\rmdefault}{\mddefault}{\updefault}{\color[rgb]{0,0,0}1}%
}}}}
\put(5301,-4661){\makebox(0,0)[lb]{\smash{{\SetFigFont{12}{14.4}{\rmdefault}{\mddefault}{\updefault}{\color[rgb]{0,0,0}3}%
}}}}
\put(3901,-4661){\makebox(0,0)[lb]{\smash{{\SetFigFont{12}{14.4}{\rmdefault}{\mddefault}{\updefault}{\color[rgb]{0,0,0} 5}%
}}}}
\put(3501,-3861){\makebox(0,0)[lb]{\smash{{\SetFigFont{12}{14.4}{\rmdefault}{\mddefault}{\updefault}{\color[rgb]{0,0,0} $T_4$:}%
}}}}
\put(4791,-3461){\makebox(0,0)[lb]{\smash{{\SetFigFont{12}{14.4}{\rmdefault}{\mddefault}{\updefault}
}}}}
}%
\end{picture}%
&
\begin{picture}(1030,1880)(3486,-4576)
\thinlines
{\color[rgb]{0,0,0}\put(6001,-3661){\line( 1,-1){600}}
}%
{\color[rgb]{0,0,0}\put(5401,-3661){\line( 0,-1){600}}
}%
{\color[rgb]{0,0,0}\put(4801,-3661){\line(-1,-1){600}}
}%
{\color[rgb]{0,0,0}\put(4201,-3081){\line( 1,-1){600}}
}%
{\color[rgb]{0,0,0}\put(4801,-3661){\line( 1, 0){1200}}
\put(6001,-3661){\line( 1, 1){600}}
\put(3901,-3161){\makebox(0,0)[lb]{\smash{{\SetFigFont{12}{14.4}{\rmdefault}{\mddefault}{\updefault}{\color[rgb]{0,0,0}2}%
}}}}
\put(6651,-3161){\makebox(0,0)[lb]{\smash{{\SetFigFont{12}{14.4}{\rmdefault}{\mddefault}{\updefault}{\color[rgb]{0,0,0} 4}%
}}}}
\put(6651,-4661){\makebox(0,0)[lb]{\smash{{\SetFigFont{12}{14.4}{\rmdefault}{\mddefault}{\updefault}{\color[rgb]{0,0,0}5}%
}}}}
\put(5301,-4661){\makebox(0,0)[lb]{\smash{{\SetFigFont{12}{14.4}{\rmdefault}{\mddefault}{\updefault}{\color[rgb]{0,0,0}1}%
}}}}
\put(3901,-4661){\makebox(0,0)[lb]{\smash{{\SetFigFont{12}{14.4}{\rmdefault}{\mddefault}{\updefault}{\color[rgb]{0,0,0} 3}%
}}}}
\put(3501,-3861){\makebox(0,0)[lb]{\smash{{\SetFigFont{12}{14.4}{\rmdefault}{\mddefault}{\updefault}{\color[rgb]{0,0,0} $T_5$:}%
}}}}
\put(4791,-3461){\makebox(0,0)[lb]{\smash{{\SetFigFont{12}{14.4}{\rmdefault}{\mddefault}{\updefault}
}}}}
}%
\end{picture}%
&
\begin{picture}(1030,1880)(3486,-4576)
\thinlines
{\color[rgb]{0,0,0}\put(6001,-3661){\line( 1,-1){600}}
}%
{\color[rgb]{0,0,0}\put(5401,-3661){\line( 0,-1){600}}
}%
{\color[rgb]{0,0,0}\put(4801,-3661){\line(-1,-1){600}}
}%
{\color[rgb]{0,0,0}\put(4201,-3081){\line( 1,-1){600}}
}%
{\color[rgb]{0,0,0}\put(4801,-3661){\line( 1, 0){1200}}
\put(6001,-3661){\line( 1, 1){600}}
\put(3901,-3161){\makebox(0,0)[lb]{\smash{{\SetFigFont{12}{14.4}{\rmdefault}{\mddefault}{\updefault}{\color[rgb]{0,0,0}1}%
}}}}
\put(6651,-3161){\makebox(0,0)[lb]{\smash{{\SetFigFont{12}{14.4}{\rmdefault}{\mddefault}{\updefault}{\color[rgb]{0,0,0} 2}%
}}}}
\put(6651,-4661){\makebox(0,0)[lb]{\smash{{\SetFigFont{12}{14.4}{\rmdefault}{\mddefault}{\updefault}{\color[rgb]{0,0,0}3}%
}}}}
\put(5301,-4661){\makebox(0,0)[lb]{\smash{{\SetFigFont{12}{14.4}{\rmdefault}{\mddefault}{\updefault}{\color[rgb]{0,0,0}5}%
}}}}
\put(3901,-4661){\makebox(0,0)[lb]{\smash{{\SetFigFont{12}{14.4}{\rmdefault}{\mddefault}{\updefault}{\color[rgb]{0,0,0}4}%
}}}}
\put(3501,-3861){\makebox(0,0)[lb]{\smash{{\SetFigFont{12}{14.4}{\rmdefault}{\mddefault}{\updefault}{\color[rgb]{0,0,0} $T_6$:}%
}}}}
\put(4791,-3461){\makebox(0,0)[lb]{\smash{{\SetFigFont{12}{14.4}{\rmdefault}{\mddefault}{\updefault}
}}}}
}%
\end{picture}%
\\ \hline
\begin{picture}(1030,1880)(3486,-4576)
\thinlines
{\color[rgb]{0,0,0}\put(6001,-3661){\line( 1,-1){600}}
}%
{\color[rgb]{0,0,0}\put(5401,-3661){\line( 0,-1){600}}
}%
{\color[rgb]{0,0,0}\put(4801,-3661){\line(-1,-1){600}}
}%
{\color[rgb]{0,0,0}\put(4201,-3081){\line( 1,-1){600}}
}%
{\color[rgb]{0,0,0}\put(4801,-3661){\line( 1, 0){1200}}
\put(6001,-3661){\line( 1, 1){600}}
\put(3901,-3161){\makebox(0,0)[lb]{\smash{{\SetFigFont{12}{14.4}{\rmdefault}{\mddefault}{\updefault}{\color[rgb]{0,0,0}1}%
}}}}
\put(6651,-3161){\makebox(0,0)[lb]{\smash{{\SetFigFont{12}{14.4}{\rmdefault}{\mddefault}{\updefault}{\color[rgb]{0,0,0} 4}%
}}}}
\put(6651,-4661){\makebox(0,0)[lb]{\smash{{\SetFigFont{12}{14.4}{\rmdefault}{\mddefault}{\updefault}{\color[rgb]{0,0,0}3}%
}}}}
\put(5301,-4661){\makebox(0,0)[lb]{\smash{{\SetFigFont{12}{14.4}{\rmdefault}{\mddefault}{\updefault}{\color[rgb]{0,0,0}2}%
}}}}
\put(3901,-4661){\makebox(0,0)[lb]{\smash{{\SetFigFont{12}{14.4}{\rmdefault}{\mddefault}{\updefault}{\color[rgb]{0,0,0} 5}%
}}}}
\put(3501,-3861){\makebox(0,0)[lb]{\smash{{\SetFigFont{12}{14.4}{\rmdefault}{\mddefault}{\updefault}{\color[rgb]{0,0,0} $T_7$:}%
}}}}
\put(4791,-3461){\makebox(0,0)[lb]{\smash{{\SetFigFont{12}{14.4}{\rmdefault}{\mddefault}{\updefault}
}}}}
}%
\end{picture}%
&
\begin{picture}(1030,1880)(3486,-4576)
\thinlines
{\color[rgb]{0,0,0}\put(6001,-3661){\line( 1,-1){600}}
}%
{\color[rgb]{0,0,0}\put(5401,-3661){\line( 0,-1){600}}
}%
{\color[rgb]{0,0,0}\put(4801,-3661){\line(-1,-1){600}}
}%
{\color[rgb]{0,0,0}\put(4201,-3081){\line( 1,-1){600}}
}%
{\color[rgb]{0,0,0}\put(4801,-3661){\line( 1, 0){1200}}
\put(6001,-3661){\line( 1, 1){600}}
\put(3901,-3161){\makebox(0,0)[lb]{\smash{{\SetFigFont{12}{14.4}{\rmdefault}{\mddefault}{\updefault}{\color[rgb]{0,0,0}1}%
}}}}
\put(6651,-3161){\makebox(0,0)[lb]{\smash{{\SetFigFont{12}{14.4}{\rmdefault}{\mddefault}{\updefault}{\color[rgb]{0,0,0} 4}%
}}}}
\put(6651,-4661){\makebox(0,0)[lb]{\smash{{\SetFigFont{12}{14.4}{\rmdefault}{\mddefault}{\updefault}{\color[rgb]{0,0,0}3}%
}}}}
\put(5301,-4661){\makebox(0,0)[lb]{\smash{{\SetFigFont{12}{14.4}{\rmdefault}{\mddefault}{\updefault}{\color[rgb]{0,0,0}5}%
}}}}
\put(3901,-4661){\makebox(0,0)[lb]{\smash{{\SetFigFont{12}{14.4}{\rmdefault}{\mddefault}{\updefault}{\color[rgb]{0,0,0} 2}%
}}}}
\put(3501,-3861){\makebox(0,0)[lb]{\smash{{\SetFigFont{12}{14.4}{\rmdefault}{\mddefault}{\updefault}{\color[rgb]{0,0,0} $T_8$:}%
}}}}
\put(4791,-3461){\makebox(0,0)[lb]{\smash{{\SetFigFont{12}{14.4}{\rmdefault}{\mddefault}{\updefault}
}}}}
}%
\end{picture}%
&
\begin{picture}(1030,1880)(3486,-4576)
\thinlines
{\color[rgb]{0,0,0}\put(6001,-3661){\line( 1,-1){600}}
}%
{\color[rgb]{0,0,0}\put(5401,-3661){\line( 0,-1){600}}
}%
{\color[rgb]{0,0,0}\put(4801,-3661){\line(-1,-1){600}}
}%
{\color[rgb]{0,0,0}\put(4201,-3081){\line( 1,-1){600}}
}%
{\color[rgb]{0,0,0}\put(4801,-3661){\line( 1, 0){1200}}
\put(6001,-3661){\line( 1, 1){600}}
\put(3901,-3161){\makebox(0,0)[lb]{\smash{{\SetFigFont{12}{14.4}{\rmdefault}{\mddefault}{\updefault}{\color[rgb]{0,0,0}2}%
}}}}
\put(6651,-3161){\makebox(0,0)[lb]{\smash{{\SetFigFont{12}{14.4}{\rmdefault}{\mddefault}{\updefault}{\color[rgb]{0,0,0} 4}%
}}}}
\put(6651,-4661){\makebox(0,0)[lb]{\smash{{\SetFigFont{12}{14.4}{\rmdefault}{\mddefault}{\updefault}{\color[rgb]{0,0,0}3}%
}}}}
\put(5301,-4661){\makebox(0,0)[lb]{\smash{{\SetFigFont{12}{14.4}{\rmdefault}{\mddefault}{\updefault}{\color[rgb]{0,0,0}1}%
}}}}
\put(3901,-4661){\makebox(0,0)[lb]{\smash{{\SetFigFont{12}{14.4}{\rmdefault}{\mddefault}{\updefault}{\color[rgb]{0,0,0} 5}%
}}}}
\put(3501,-3861){\makebox(0,0)[lb]{\smash{{\SetFigFont{12}{14.4}{\rmdefault}{\mddefault}{\updefault}{\color[rgb]{0,0,0} $T_9$:}%
}}}}
\put(4791,-3461){\makebox(0,0)[lb]{\smash{{\SetFigFont{12}{14.4}{\rmdefault}{\mddefault}{\updefault}
}}}}
}%
\end{picture}%
\\ \hline
\begin{picture}(1030,1880)(3486,-4576)
\thinlines
{\color[rgb]{0,0,0}\put(6001,-3661){\line( 1,-1){600}}
}%
{\color[rgb]{0,0,0}\put(5401,-3661){\line( 0,-1){600}}
}%
{\color[rgb]{0,0,0}\put(4801,-3661){\line(-1,-1){600}}
}%
{\color[rgb]{0,0,0}\put(4201,-3081){\line( 1,-1){600}}
}%
{\color[rgb]{0,0,0}\put(4801,-3661){\line( 1, 0){1200}}
\put(6001,-3661){\line( 1, 1){600}}
\put(3901,-3161){\makebox(0,0)[lb]{\smash{{\SetFigFont{12}{14.4}{\rmdefault}{\mddefault}{\updefault}{\color[rgb]{0,0,0}1}%
}}}}
\put(6651,-3161){\makebox(0,0)[lb]{\smash{{\SetFigFont{12}{14.4}{\rmdefault}{\mddefault}{\updefault}{\color[rgb]{0,0,0} 3}%
}}}}
\put(6651,-4661){\makebox(0,0)[lb]{\smash{{\SetFigFont{12}{14.4}{\rmdefault}{\mddefault}{\updefault}{\color[rgb]{0,0,0}5}%
}}}}
\put(5301,-4661){\makebox(0,0)[lb]{\smash{{\SetFigFont{12}{14.4}{\rmdefault}{\mddefault}{\updefault}{\color[rgb]{0,0,0}4}%
}}}}
\put(3901,-4661){\makebox(0,0)[lb]{\smash{{\SetFigFont{12}{14.4}{\rmdefault}{\mddefault}{\updefault}{\color[rgb]{0,0,0}  2}%
}}}}
\put(3501,-3861){\makebox(0,0)[lb]{\smash{{\SetFigFont{12}{14.4}{\rmdefault}{\mddefault}{\updefault}{\color[rgb]{0,0,0} $T_{10}$:}%
}}}}
\put(4791,-3461){\makebox(0,0)[lb]{\smash{{\SetFigFont{12}{14.4}{\rmdefault}{\mddefault}{\updefault}
}}}}
}%
\end{picture}%
&
\begin{picture}(1030,1880)(3486,-4576)
\thinlines
{\color[rgb]{0,0,0}\put(6001,-3661){\line( 1,-1){600}}
}%
{\color[rgb]{0,0,0}\put(5401,-3661){\line( 0,-1){600}}
}%
{\color[rgb]{0,0,0}\put(4801,-3661){\line(-1,-1){600}}
}%
{\color[rgb]{0,0,0}\put(4201,-3081){\line( 1,-1){600}}
}%
{\color[rgb]{0,0,0}\put(4801,-3661){\line( 1, 0){1200}}
\put(6001,-3661){\line( 1, 1){600}}
\put(3901,-3161){\makebox(0,0)[lb]{\smash{{\SetFigFont{12}{14.4}{\rmdefault}{\mddefault}{\updefault}{\color[rgb]{0,0,0}1}%
}}}}
\put(6651,-3161){\makebox(0,0)[lb]{\smash{{\SetFigFont{12}{14.4}{\rmdefault}{\mddefault}{\updefault}{\color[rgb]{0,0,0} 4}%
}}}}
\put(6651,-4661){\makebox(0,0)[lb]{\smash{{\SetFigFont{12}{14.4}{\rmdefault}{\mddefault}{\updefault}{\color[rgb]{0,0,0}2}%
}}}}
\put(5301,-4661){\makebox(0,0)[lb]{\smash{{\SetFigFont{12}{14.4}{\rmdefault}{\mddefault}{\updefault}{\color[rgb]{0,0,0}3}%
}}}}
\put(3901,-4661){\makebox(0,0)[lb]{\smash{{\SetFigFont{12}{14.4}{\rmdefault}{\mddefault}{\updefault}{\color[rgb]{0,0,0}5}%
}}}}
\put(3501,-3861){\makebox(0,0)[lb]{\smash{{\SetFigFont{12}{14.4}{\rmdefault}{\mddefault}{\updefault}{\color[rgb]{0,0,0} $T_{11}$:}%
}}}}
\put(4791,-3461){\makebox(0,0)[lb]{\smash{{\SetFigFont{12}{14.4}{\rmdefault}{\mddefault}{\updefault}
}}}}
}%
\end{picture}%
&
\begin{picture}(1030,1880)(3486,-4576)
\thinlines
{\color[rgb]{0,0,0}\put(6001,-3661){\line( 1,-1){600}}
}%
{\color[rgb]{0,0,0}\put(5401,-3661){\line( 0,-1){600}}
}%
{\color[rgb]{0,0,0}\put(4801,-3661){\line(-1,-1){600}}
}%
{\color[rgb]{0,0,0}\put(4201,-3081){\line( 1,-1){600}}
}%
{\color[rgb]{0,0,0}\put(4801,-3661){\line( 1, 0){1200}}
\put(6001,-3661){\line( 1, 1){600}}
\put(3901,-3161){\makebox(0,0)[lb]{\smash{{\SetFigFont{12}{14.4}{\rmdefault}{\mddefault}{\updefault}{\color[rgb]{0,0,0}2}%
}}}}
\put(6651,-3161){\makebox(0,0)[lb]{\smash{{\SetFigFont{12}{14.4}{\rmdefault}{\mddefault}{\updefault}{\color[rgb]{0,0,0}3}%
}}}}
\put(6651,-4661){\makebox(0,0)[lb]{\smash{{\SetFigFont{12}{14.4}{\rmdefault}{\mddefault}{\updefault}{\color[rgb]{0,0,0}5}%
}}}}
\put(5301,-4661){\makebox(0,0)[lb]{\smash{{\SetFigFont{12}{14.4}{\rmdefault}{\mddefault}{\updefault}{\color[rgb]{0,0,0}1}%
}}}}
\put(3901,-4661){\makebox(0,0)[lb]{\smash{{\SetFigFont{12}{14.4}{\rmdefault}{\mddefault}{\updefault}{\color[rgb]{0,0,0} 4}%
}}}}
\put(3501,-3861){\makebox(0,0)[lb]{\smash{{\SetFigFont{12}{14.4}{\rmdefault}{\mddefault}{\updefault}{\color[rgb]{0,0,0} $T_{12}$:}%
}}}}
\put(4791,-3461){\makebox(0,0)[lb]{\smash{{\SetFigFont{12}{14.4}{\rmdefault}{\mddefault}{\updefault}
}}}}
}%
\end{picture}%
\\ \hline
\begin{picture}(1030,1880)(3486,-4576)
\thinlines
{\color[rgb]{0,0,0}\put(6001,-3661){\line( 1,-1){600}}
}%
{\color[rgb]{0,0,0}\put(5401,-3661){\line( 0,-1){600}}
}%
{\color[rgb]{0,0,0}\put(4801,-3661){\line(-1,-1){600}}
}%
{\color[rgb]{0,0,0}\put(4201,-3081){\line( 1,-1){600}}
}%
{\color[rgb]{0,0,0}\put(4801,-3661){\line( 1, 0){1200}}
\put(6001,-3661){\line( 1, 1){600}}
\put(3901,-3161){\makebox(0,0)[lb]{\smash{{\SetFigFont{12}{14.4}{\rmdefault}{\mddefault}{\updefault}{\color[rgb]{0,0,0}1}%
}}}}
\put(6651,-3161){\makebox(0,0)[lb]{\smash{{\SetFigFont{12}{14.4}{\rmdefault}{\mddefault}{\updefault}{\color[rgb]{0,0,0} 4}%
}}}}
\put(6651,-4661){\makebox(0,0)[lb]{\smash{{\SetFigFont{12}{14.4}{\rmdefault}{\mddefault}{\updefault}{\color[rgb]{0,0,0}2}%
}}}}
\put(5301,-4661){\makebox(0,0)[lb]{\smash{{\SetFigFont{12}{14.4}{\rmdefault}{\mddefault}{\updefault}{\color[rgb]{0,0,0}5}%
}}}}
\put(3901,-4661){\makebox(0,0)[lb]{\smash{{\SetFigFont{12}{14.4}{\rmdefault}{\mddefault}{\updefault}{\color[rgb]{0,0,0} 3}%
}}}}
\put(3501,-3861){\makebox(0,0)[lb]{\smash{{\SetFigFont{12}{14.4}{\rmdefault}{\mddefault}{\updefault}{\color[rgb]{0,0,0} $T_{13}$:}%
}}}}
\put(4791,-3461){\makebox(0,0)[lb]{\smash{{\SetFigFont{12}{14.4}{\rmdefault}{\mddefault}{\updefault}
}}}}
}%
\end{picture}%
& 
\begin{picture}(1030,1880)(3486,-4576)
\thinlines
{\color[rgb]{0,0,0}\put(6001,-3661){\line( 1,-1){600}}
}%
{\color[rgb]{0,0,0}\put(5401,-3661){\line( 0,-1){600}}
}%
{\color[rgb]{0,0,0}\put(4801,-3661){\line(-1,-1){600}}
}%
{\color[rgb]{0,0,0}\put(4201,-3081){\line( 1,-1){600}}
}%
{\color[rgb]{0,0,0}\put(4801,-3661){\line( 1, 0){1200}}
\put(6001,-3661){\line( 1, 1){600}}
\put(3901,-3161){\makebox(0,0)[lb]{\smash{{\SetFigFont{12}{14.4}{\rmdefault}{\mddefault}{\updefault}{\color[rgb]{0,0,0}1}%
}}}}
\put(6651,-3161){\makebox(0,0)[lb]{\smash{{\SetFigFont{12}{14.4}{\rmdefault}{\mddefault}{\updefault}{\color[rgb]{0,0,0} 4}%
}}}}
\put(6651,-4661){\makebox(0,0)[lb]{\smash{{\SetFigFont{12}{14.4}{\rmdefault}{\mddefault}{\updefault}{\color[rgb]{0,0,0}5}%
}}}}
\put(5301,-4661){\makebox(0,0)[lb]{\smash{{\SetFigFont{12}{14.4}{\rmdefault}{\mddefault}{\updefault}{\color[rgb]{0,0,0}2}%
}}}}
\put(3901,-4661){\makebox(0,0)[lb]{\smash{{\SetFigFont{12}{14.4}{\rmdefault}{\mddefault}{\updefault}{\color[rgb]{0,0,0}3}%
}}}}
\put(3501,-3861){\makebox(0,0)[lb]{\smash{{\SetFigFont{12}{14.4}{\rmdefault}{\mddefault}{\updefault}{\color[rgb]{0,0,0} $T_{14}$:}%
}}}}
\put(4791,-3461){\makebox(0,0)[lb]{\smash{{\SetFigFont{12}{14.4}{\rmdefault}{\mddefault}{\updefault}
}}}}
}%
\end{picture}%
&
\begin{picture}(1030,1880)(3486,-4576)
\thinlines
{\color[rgb]{0,0,0}\put(6001,-3661){\line( 1,-1){600}}
}%
{\color[rgb]{0,0,0}\put(5401,-3661){\line( 0,-1){600}}
}%
{\color[rgb]{0,0,0}\put(4801,-3661){\line(-1,-1){600}}
}%
{\color[rgb]{0,0,0}\put(4201,-3081){\line( 1,-1){600}}
}%
{\color[rgb]{0,0,0}\put(4801,-3661){\line( 1, 0){1200}}
\put(6001,-3661){\line( 1, 1){600}}
\put(3901,-3161){\makebox(0,0)[lb]{\smash{{\SetFigFont{12}{14.4}{\rmdefault}{\mddefault}{\updefault}{\color[rgb]{0,0,0}1}%
}}}}
\put(6651,-3161){\makebox(0,0)[lb]{\smash{{\SetFigFont{12}{14.4}{\rmdefault}{\mddefault}{\updefault}{\color[rgb]{0,0,0}2}%
}}}}
\put(6651,-4661){\makebox(0,0)[lb]{\smash{{\SetFigFont{12}{14.4}{\rmdefault}{\mddefault}{\updefault}{\color[rgb]{0,0,0}5}%
}}}}
\put(5301,-4661){\makebox(0,0)[lb]{\smash{{\SetFigFont{12}{14.4}{\rmdefault}{\mddefault}{\updefault}{\color[rgb]{0,0,0}4}%
}}}}
\put(3901,-4661){\makebox(0,0)[lb]{\smash{{\SetFigFont{12}{14.4}{\rmdefault}{\mddefault}{\updefault}{\color[rgb]{0,0,0} 3}%
}}}}
\put(3501,-3861){\makebox(0,0)[lb]{\smash{{\SetFigFont{12}{14.4}{\rmdefault}{\mddefault}{\updefault}{\color[rgb]{0,0,0} $T_{15}$:}%
}}}}
\put(4791,-3461){\makebox(0,0)[lb]{\smash{{\SetFigFont{12}{14.4}{\rmdefault}{\mddefault}{\updefault}
}}}}
}%
\end{picture}%
\\
\hline
\end{tabular}
\caption{All 15 unrooted binary phylogenetic $X$-trees with $X=\{1,2,3,4,5\}$ in $\mathcal{T}$.}
\label{Table-alltrees}
\end{table*}

We first calculate the expected parsimony scores $\mu_2$ for all 15 trees in $\mathcal{T}$, i.e. for the case when the data generated by $(T_1,\theta_{T_1})$ is analyzed in terms of 2-tuples, using Formula (\ref{expMT}). By way of example, we consider the $2$-tuple $(f_1 f_2)$ consisting of the characters $f_1 = AAABB$ and $f_2=ABBBB$. 
In Figure \ref{T1_ex2} the leaves of tree $T_1$ are assigned the states of the $2$-tuple $(f_1f_2)$ and the states at the inner vertices represent a possible extension. Note that this particular extension is a most parsimonious one and requires two changes. Recall that a most parsimonious extension can for example be found with the Fitch algorithm \citep{Fitch}. Note, however, that it is immediately clear in this example that the extension depicted is a most parsimonious one, as the $2$-tuple depicted employs three states which in turn implies that any most parsimonious extension will require at least $3-1=2$ changes. Thus, the parsimony score of the $2$-tuple $(f_1 f_2)$ on tree $T_1$ is two, i.e.  $l((f_1 f_2),T_1)=2$ (see Section \ref{definitions}).
Moreover, we require the probability of the $2$-tuple $(f_1 f_2)$ and therefore we have to calculate the probabilities of the characters $f_1$ and $f_2$. Recall that the calculation of the probability $(f_1|(T_1,\theta_{T_1}))$ was already shown in Example \ref{Ex1}; the calculation of the probability $P(f_2|(T_1,\theta_{T_1})$ follows analogously.
Using the independence of sites assumption, the probability of the $2$-tuple $(f_1, f_2)$ evolving on tree $(T_1,\theta_{T_1})$ then calculates as
\begin{align*}
P((f_1 f_2)|(T_1,\theta_{T_1})) = & P(f_1|(T_1,\theta_{T_1})) \cdot P(f_2|(T_1,\theta_{T_1})) \\
= & \left( \frac{q}{2} - \frac{p q}{2} - \frac{3 q^2}{2} + \frac{3 p q^2}{2} + \frac{3 q^3}{2} - p q^3 - \frac{q^4}{2} \right) \\ & \cdot \left( \frac{p}{2} - \frac{p^2}{2} - \frac{5 p q}{2} + \frac{5 p^2 q}{2} + \frac{q^2}{2} + 4 p q^2 - \right.\\
& \left. 4 p^2 q^2 - q^3 - 2 p q^3 + 2 p^2 q^3 + \frac{q^4}{2} \right)\\
& = -2 p^3 q^6+7 p^3 q^5-\frac{19 p^3 q^4}{2}+\frac{25 p^3 q^3}{4}-2 p^3 q^2+\frac{p^3 q}{4} \\
&-p^2 q^7+7 p^2 q^6-\frac{69 p^2 q^5}{4}+\frac{41 p^2 q^4}{2}-\frac{51 p^2 q^3}{4}+4 p^2 q^2 \\ &-\frac{p^2
   q}{2}+\frac{p q^7}{2}-\frac{13 p q^6}{4}+8 p q^5-\frac{39 p q^4}{4}+\frac{25 p q^3}{4}-2 p q^2 \\ &+\frac{p q}{4}-\frac{q^8}{4}+\frac{5 q^7}{4}-\frac{5 q^6}{2}+\frac{5 q^5}{2}-\frac{5
   q^4}{4}+\frac{q^3}{4}
\end{align*}
Multiplying the probability $P((f_1 f_2)|(T_1,\theta_{T_1}))$ of $(f_1 f_2)$ evolving on tree $(T_1,\theta_{T_1})$ with its parsimony score yields one summand for the calculation of the expected parsimony score of tree $T_1$. In the same manner, using (\ref{expMT}), we retrieve the following expected parsimony scores $\mu_2(T_i| (T_1,\theta_{T_1}))$ for all 15 trees in $\mathcal{T}$:
\begin{align*}
\mu_2&(T_1| (T_1,\theta_{T_1})) =  2 p + 5 q - 4 p q - p^2 q - 5 q^2 - 4 p q^2  + \notag \\& 4 p^2 q^2 + q^3 + 10 p q^3  - 4 p^2 q^3 + q^4 - 4 p q^4   \\
\mu_2&(T_2| (T_1,\theta_{T_1}))= 2 p + 7 q - 8 p q - p^2 q - 12 q^2 \notag \\ & + 10 p q^2 + 4 p^2 q^2 + 8 q^3 - 
 4 p q^3 - 4 p^2 q^3 - q^4  \\
\mu_2&(T_3| (T_1,\theta_{T_1}))=2 p - p^2 + 7 q - 7 p q + 4 p^2 q - 12 q^2 \notag \\ & + 5 p q^2 - 4 p^2 q^2 +
 8 q^3  + 4 p q^3 - q^4 - 4 p q^4 \\
\mu_2&(T_4| (T_1,\theta_{T_1}))=2 p - p^2 + 7 q - 6 p q + 3 p^2 q - 12 q^2  \notag \\ &+ 9 q^3 + 10 p q^3 -
 4 p^2 q^3 - 3 q^4 - 4 p q^4 \\
\mu_2&(T_5| (T_1,\theta_{T_1}))=2 p + 6 q - 6 p q - p^2 q - 9 q^2 + 4 p q^2 \notag \\ &+ 4 p^2 q^2 + 6 q^3 -
 4 p^2 q^3 - q^4 \\
\mu_2&(T_6| (T_1,\theta_{T_1}))=2 p - p^2 + 7 q - 7 p q + 4 p^2 q - 12 q^2 \notag \\ & + 5 p q^2 - 4 p^2 q^2 +
 8 q^3 + 4 p q^3 - q^4 - 4 p q^4 \\
\mu_2&(T_7| (T_1,\theta_{T_1}))=2 p + 7 q - 8 p q - 12 q^2 + 9 p q^2  \notag \\ & + 8 q^3 - q^4 - 4 p q^4 \\
\mu_2&(T_8| (T_1,\theta_{T_1}))=2 p + 6 q - 6 p q - 9 q^2 + 3 p q^2 + 6 q^3  \notag \\ &+ 4 p q^3 - q^4 -
 4 p q^4 \\
\mu_2&(T_9| (T_1,\theta_{T_1}))=2 p + 7 q - 7 p q - p^2 q - 13 q^2 + 6 p q^2 \notag \\ & + 4 p^2 q^2 +
 12 q^3 - 4 p^2 q^3 - 5 q^4 \\
\mu_2&(T_{10}| (T_1,\theta_{T_1}))=2p +6 q - 6 p q - p^2 q - 9 q^2 + 4 p q^2  \notag \\ & + 4 p^2 q^2 + 6 q^3 -4 p^2 q^3 - q^4 \\
\mu_2&(T_{11}| (T_1,\theta_{T_1}))=2 p + 7 q - 8 p q - p^2 q - 11 q^2  + 8 p q^2  \notag \\ &+ 4 p^2 q^2 + 5 q^3  + 2 p q^3 - 4 p^2 q^3 + q^4 - 4 p q^4 \\
\mu_2&(T_{12}| (T_1,\theta_{T_1}))=2 p + 7 q - 8 p q - p^2 q - 12 q^2 + \notag \\ &10 p q^2 + 4 p^2 q^2+ 8 q^3  -
 4 p q^3 - 4 p^2 q^3 - q^4 \\
\mu_2&(T_{13}| (T_1,\theta_{T_1}))= 2 p + 7 q - 8 p q - 12 q^2 + 9 p q^2 +  \notag \\ & 8 q^3 - q^4 - 4 p q^4 \\
\mu_2&(T_{14}| (T_1,\theta_{T_1}))=2 p + 6 q - 6 p q - 9 q^2 + 3 p q^2 + 6 q^3 \notag \\ & + 4 p q^3- q^4 -
 4 p q^4 \\
\mu_2&(T_{15}| (T_1,\theta_{T_1}))=2 p + 7 q - 7 p q - p^2 q - 13 q^2 + 6 p q^2 \notag \\ & + 4 p^2 q^2 +
 12 q^3 - 4 p^2 q^3 - 5 q^4
\end{align*}

\noindent
Note that due to symmetries in the trees, some of the expected parsimony scores are equal, for instance those of tree $T_3$ and tree $T_6$. MP for $2$-tuple-site data is statistically inconsistent if there exists a combination of $p$ and $q$ (with $p, q \, \in [0,\frac{1}{2}]$ as we are considering the i.i.d. $N_2$-model) such that
\begin{align}
&T_1 \neq \argmin_{T' \in \mathcal{T}} \mu_2(T'|(T_1,\theta_{T_1})), \text{ i.e. } \notag  \\
&\mu_2(T_1|(T_1,\theta_{T_1})) > \min_{T' \in \mathcal{T}} \mu_2(T'|(T_1,\theta_{T_1})). \label{ineq1}
\end{align}
We used the computer algebra system Mathematica \cite{Mathematica} to solve Inequality \eqref{ineq1}. Note that all calculations and plots presented in the following were done with Mathematica. 
We can see that, for instance, there exists a more parsimonious tree $T'$ with $T'$ different from $T_1$ if $p=\frac{91}{256} \approx 0.35547$ and $q=0.1$. For these values, the expected parsimony score $\mu_2$ of $T_1$ is 1.000957, whereas the expected MP trees are $T_3$ and $T_6$, because their expected parsimony score is 0.9881934. In $T_3$ and $T_6$ the edges incident to leaves 1 and 4 are grouped together. Note that these edges are long edges in the generating tree $T_1$ as $p>q$. So, similar to the Felsenstein scenario \cite{Felsensteinzone} on four sequences and single characters, we observe the phenomenon of long branch attraction. 
In particular, MP reconstructs an incorrect tree in this case, which shows the statistical inconsistency of MP on $2$-tuple-site data and two character states.
\end{proof}

Now that we have shown the statistical inconsistency of MP on $2$-tuple site data by presenting an explicit example for $(T_1,\theta_{T_1})$, we search for the set of all values for $p$ and $q$ such that MP is statistically inconsistent, i.e. we want to analyze the inconsistency zone. 
Again, we assume tree $(T_1,\theta_{T_1})$ (Figure \ref{T1}) to be the generating tree, on which all the characters evolve.  
The set of values for $p$ and $q$ such that MP is statistically inconsistent can then be described in the following way:
\begin{align*}
\{ (p,q) | \mu_2(T_1|(T_1,\theta_{T_1})) > \min_{T' \in \mathcal{T}} \mu_2(T'|(T_1,\theta_{T_1}))\}.
\end{align*}
For all these combinations of $p$ and $q$ in $[0, \frac{1}{2}] \times [0, \frac{1}{2}]$  (as we are still considering the i.i.d. $N_2$-model) the expected parsimony score of tree $T_1$ is not the minimum of the expected parsimony scores of all trees. With Mathematica the space of all possible choices of $p$ and $q$ can be separated into two parts using Formula \eqref{ineq1}, where one part contains all combinations of $p$ and $q$ such that MP is consistent while the other part contains all combinations of $p$ and $q$ such that MP is inconsistent (cf. Figure \ref{SI2ZT}). Details of the calculation can be found in the appendix.

\begin{figure}[t]
\begin{center}
\includegraphics[scale=0.9]{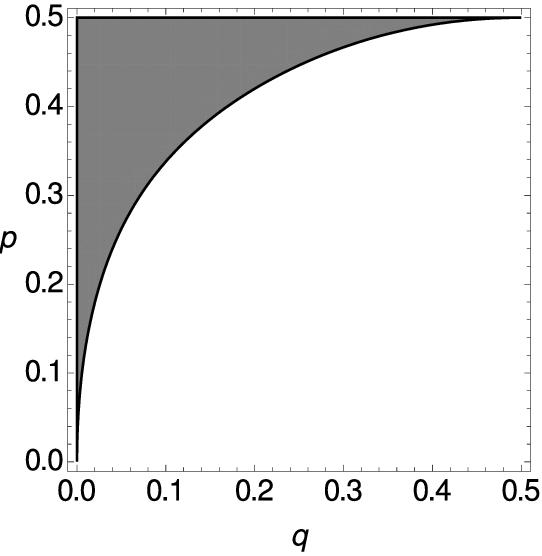}
\end{center}
\caption{Statistical inconsistency of MP on $2$-tuple-site data with two character states and five taxa.
The shaded part contains all possible combinations of $p$ and $q$ such that MP is statistically inconsistent on $2$-tuple-site data (zone of inconsistency). The white part contains all combinations of $p$ and $q$ such that MP is statistically consistent.}
\label{SI2ZT}
\end{figure}

Note that $p$ has to be larger than $q$ in order for MP to be statistically inconsistent (see Figure \ref{SI2ZT}). So again, MP on $2$-tuple-site data is statistically inconsistent due to long branch attraction.
Additionally, we integrate the function that separates the two parts and calculate the size of both parts. The shaded part, where MP on $2$-tuple site data is statistically inconsistent, is $17.95 \%$ of the space $[0, \frac{1}{2}] \times [0, \frac{1}{2}]$, while the part where MP is statistically consistent accumulates to $82.05\%$ of the space $[0, \frac{1}{2}] \times [0, \frac{1}{2}]$. \\

Theorem \ref{Theorem2} shows that MP is statistically inconsistent on $2$-tuple-site data even for more than four leaves. This inconsistency has long been known for single-site data.
However, we now want to compare $2$-tuple-site and single-site data.
Therefore, we also separate all combinations of $p$ and $q$ into two parts, such that one part contains all combinations of $p$ and $q$ where MP applied to single-site data is statistically consistent, while the other part contains all combinations of $p$ and $q$ such that MP is inconsistent.
Then we can compare the curves that separate the space $[0,\frac{1}{2}] \times [0, \frac{1}{2}]$ for single-site data and $2$-tuple-site data, respectively.
The two curves can be seen Figure \ref{SI2ZTandC}. 

\begin{figure}[t]
\begin{center}
\includegraphics[scale=0.9]{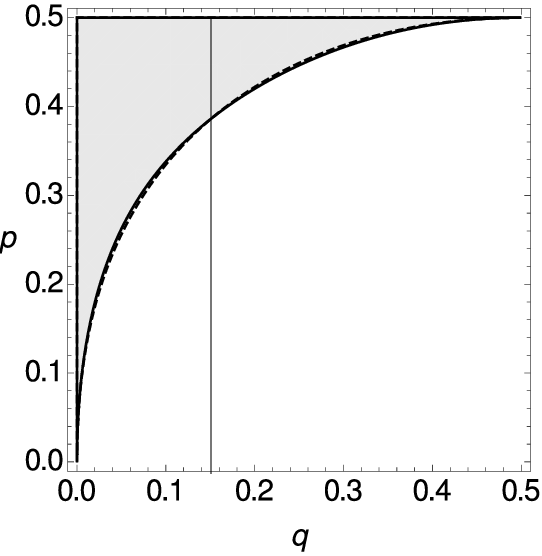}
\end{center}
\caption{Statistical inconsistency of MP on $2$-tuple-site and single-site data, two character states and five taxa. The dashed curve describes the separation of $[0, \frac{1}{2}] \times [0, \frac{1}{2}]$ into combinations of $p$ and $q$ such that MP is statistically inconsistent (gray area) or consistent (white area) on single-site data, while the solid curve describes this separation for the $2$-tuple-site data. The vertical line shows where both curves intersect.}
\label{SI2ZTandC}
\end{figure}

One can see that for small $q$ the curve of the single-site data lies slightly underneath the curve of the $2$-tuple-site data before they intersect for $q=0.150756$ and then switch their roles. This shows that there exist combinations of $p$ and $q$ such that MP on $2$-tuple-site data is consistent while MP on single-site data is already inconsistent and also vice versa. If $q < 0.150756$ the curve of the single-site data lies under the curve of the $2$-tuple-site data. So here, MP can be statistically consistent for $2$-tuple-site data but statistically inconsistent for single-site data. An example for such a case is given by $p=\frac{1}{24}\approx 0.04167$ and $q=\frac{15}{16384} \approx 0.00092$. If $q > 0.150756$ this relationship changes. With $p=\frac{63}{128}\approx 0.49219$ and $q=\frac{811}{2048}\approx 0.39599$ we have an example where MP on $2$-tuple-site data is statistically inconsistent whereas MP on single-site data is statistically consistent. This leads to the following observation. 

\begin{observation}
For five sequences, two character states and the i.i.d. $N_2$-model, there is no equivalence between the statistical inconsistency of MP on single-site data and on $2$-tuple-site data. 
\end{observation}

Note that this observation reflects a counterexample to the equivalence between the statistical inconsistency of MP on single-site data and $k$-tuple-site data for four sequences established in \cite{Steel-Penny}, which is minimal in the following sense: Theorem \ref{SP} holds for $n=4$ (number of sequences), arbitrary $r$ (number of character states) and arbitrary $k$, while we have seen in the above example that it no longer holds for $n=5, \, k=2$ and $r=2$, so the equivalence already fails when the number of taxa is increased by one, even if only two states are considered. 
Note however, that even though there exists no equivalence between the statistical inconsistency of MP on single-site and $2$-tuple-site data for five sequences, there is still a close relationship between the two types of data since the region where they differ (in regard to whether MP is statistically consistent or inconsistent) is very small. \\
Additionally, we now also compare the size of the areas where MP is statistically consistent on $2$-tuple-site data, but inconsistent on single-site data and vice versa. The size of the area where MP is statistically consistent on $2$-tuple-site data, but statistically inconsistent on single-site data is $0.000568937$. The size of the area for the reversed case is $0.000478658$. We see that the first area is slightly larger than the second area, but both areas are very small -- so the consistency zones almost coincide.   \\

\subsection{Statistical inconsistency for $2$- \& $3$-tuple site data and two \& four character states}
In the previous section we have analyzed the statistical inconsistency of MP on $2$-tuple-site data and two character states. Now, we extend this analysis to four character states, i.e. we consider alphabets with four elements like the DNA alphabet. Furthermore, we analyze the statistical inconsistency of MP on $3$-tuple-site data, again for two and four character states. In the following we summarize the results, which are similar to the ones in the previous section.

\begin{theorem}\label{TheoremAll}
For five sequences, MP is statistically inconsistent
\begin{enumerate}
\item on $2$-tuple-site data for four character states and the i.i.d. $N_4$-model.
\item on $3$-tuple-site data for two character states and the i.i.d. $N_2$-model.
\item on $3$-tuple-site data for four character states and the i.i.d. $N_4$-model.
\end{enumerate}
\end{theorem}

\begin{proof}
As in the proof of Theorem \ref{Theorem2} we assume $(T_1,\theta_{T_1})$ (Figure \ref{T1}) to be the generating tree on which all characters evolve. Note that we have $p,q \in [0, \frac{1}{r}]$ for $r=2,4$ as we are still considering the $N_r$-model. In order to show that MP is inconsistent on $k$-tuple site data for $k=2,3$, we have to show that the generating tree $T_1$ is not the expected MP tree, i.e.
\begin{align}
&T_1 \neq \argmin_{T' \in \mathcal{T}} \mu_k(T'|(T_1,\theta_{T_1})), \text{ which implies} \notag \\
&\mu_k(T_1|(T,\theta_{T_1})) > \min_{T' \in \mathcal{T}} \mu_k(T'|(T_1,\theta_{T_1})),  \label{ineq2}
\end{align}
where $\mathcal{T}$ is again the set of all 15 phylogenetic trees with five taxa.
The expected parsimony scores $\mu_k$ of all trees can again be calculated according to Equation (\ref{expMT}), but we refrain from explicitly listing them here. However, in all cases it was possible to find choices of $p$ and $q$ such that Inequality \eqref{ineq2} holds and the results are summarized in Table \ref{SolutionsAll}. Note that for all combinations of $k$ and $r$, trees $T_3$ and $T_6$ turned out to be the expected MP trees; their expected parsimony scores as well as the parsimony score of the generating tree $T_1$ are also summarized in Table \ref{SolutionsAll}. Thus, in all cases there are choices of $p$ and $q$ such that MP is statistically inconsistent for $k$-tuple-site data.
\end{proof}

\begin{table}[h]
\renewcommand{\arraystretch}{1.5}
\center
\begin{tabular}{|c|c|c|c|c|}
\hline
 & $p$ &  $q$ & $ \mu_k(T_1|(T_1,\theta_{T_1}))$ &  $\mu_k(T_{3,6}|(T_1,\theta_{T_1}))$ \\
\hline 
$k=2$ \& $r=4$ & $\frac{31}{128}\approx 0.2422$ & $ 0.125$ & $3.33453$ & $3.32941 $ \\
\hline 
$k=3$ \& $r=2$ & $\frac{15}{32}\approx 0.46875$ & $0.25$ & $2.91487$ & $2.90938$ \\
\hline 
$k=3$ \& $r=4$ & $\frac{5}{24} \approx 0.2083$ & $\frac{1}{12} \approx 0.0833$ & $3.58289$ & $3.58265$\\
\hline
\end{tabular}
\caption{Representative examples of $p$ and $q$ such that MP is statistically inconsistent on $k$-tuple-site data for $r$ character states and the corresponding expected parsimony scores on trees $T_1, T_3$ and $T_6$.}
\label{SolutionsAll}
\end{table}

We now compare the zones of statistical inconsistency on single-site and $k$-tuple-site data for all three cases of Theorem \ref{TheoremAll}, where we again assume all characters to have evolved on tree $(T_1,\theta_{T_1})$ (Figure \ref{T1}).

Therefore, both for single-site data as well as for $k$-tuple-site data, we separate the space $[0,\frac{1}{r}] \times [0,\frac{1}{r}]$ for $r=2$, respectively $r=4$, of all possible combinations of $p$ and $q$ into two parts, such that one part contains all combinations of $p$ and $q$ for which MP is consistent, while the other part contains all combinations of $p$ and $q$ such that MP is inconsistent. The results are summarized in Figure \ref{GraphsAll}.

\begin{figure}[htbp]
 \centering
    \begin{subfigure}{0.4\textwidth}
      \centering
     \includegraphics[scale=0.5]{Inkonsistenz2Z2T-T1ConsistentTandC-final2.eps}
\caption{$k=2$, $r=2$.}
    \end{subfigure}%
    ~~~~
    \begin{subfigure}{0.4\textwidth}
      \centering
     \includegraphics[scale=0.5]{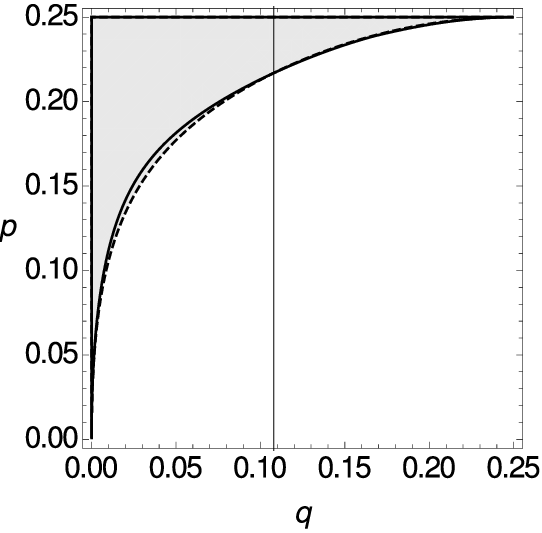}
\caption{$k=2$, $r=4$.}
    \end{subfigure} \\[0.5cm]
\begin{subfigure}{0.4\textwidth}
      \centering
     \includegraphics[scale=0.5]{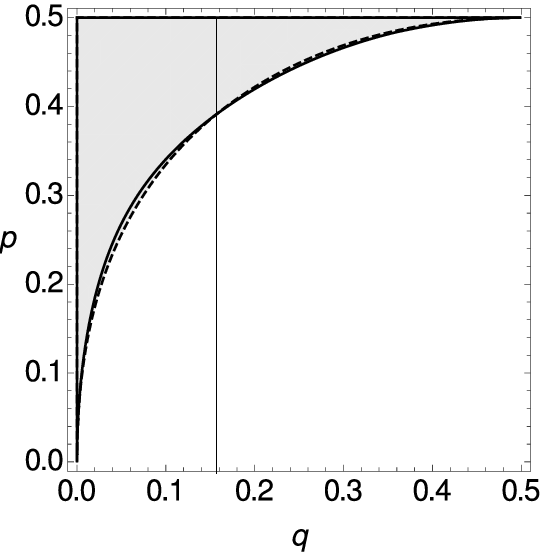}
\caption{$k=3$, $r=2$.}
    \end{subfigure}%
     ~~~~
    \begin{subfigure}{0.4\textwidth}
      \centering
   \includegraphics[scale=0.5]{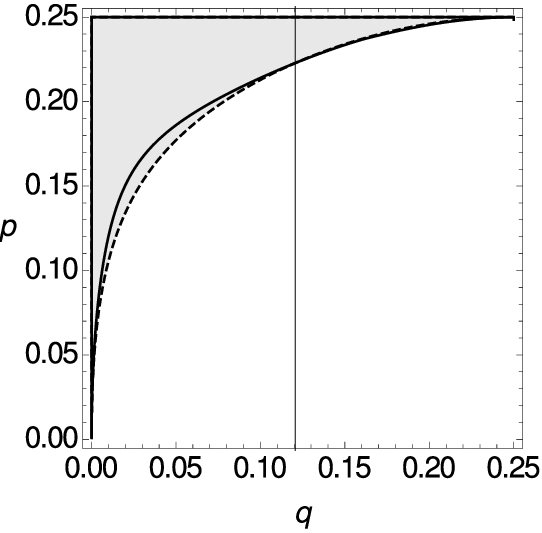}
\caption{$k=3$, $r=4$.}
    \end{subfigure} 
\caption{Statistical inconsistency of MP on $k$-tuple-site data, single-site data, and $r=2$, respectively $r=4$, character states that evolved on tree $(T_1, \theta_{T_1})$. The dashed curve describes the separation of $[0, \frac{1}{2}] \times [0, \frac{1}{2}]$ (in Subfigures (a) and (c)), respectively $[0, \frac{1}{4}] \times [0, \frac{1}{4}]$ (in Subfigures (b) and (d)), into combinations of $p$ and $q$ such that MP is statistically inconsistent (gray area) or consistent (white area) on single-site data. The solid curve shows this separation for the $k$-tuple-site data. The vertical line shows where both curves intersect.}
\label{GraphsAll}
\end{figure}

In all cases, MP on $k$-tuple-site data and two or four character states is only statistically inconsistent if $p$ is greater than $q$, i.e. this again seems to be a case of long branch attraction. For small $q$, we can observe that the curve of the single-site data is below the curve of the $2$-tuple-site data. So, for small $q$ there exist cases where MP is already statistically inconsistent on single-site data, while it is still statistically consistent on $2$-tuple-site data. However, at some stage the curves intersect  and switch their roles (cf. Table \ref{Intersection}).

\begin{table}[h]
\renewcommand{\arraystretch}{1.5}
\center
\begin{tabular}{|c|c|}
\hline
 &$q$ \\
\hline 
$k=2$ \& $r=4$ & $0.17871$  \\
\hline 
$k=3$ \& $r=2$ & $0.157049$  \\
\hline 
$k=3$ \& $r=4$ & $0.12065$   \\
\hline
\end{tabular}
\caption{Value of $q$ at the point of intersection of the curves of single-site data and the curve of $k$-tuple site data depicted in Figure \ref{GraphsAll} (b) -- (d).}
\label{Intersection}
\end{table}

This leads to the following observation.
\begin{observation}
For five sequences, $r$ character states and the i.i.d. $N_r$-model $(r=2,4)$ there is no equivalence between the statistical inconsistency of MP on single-site data and on $k$-tuple-site data for $k=2,3$.
\end{observation}

We now also compare the sizes of the areas between both curves and the results are given in Table \ref{SizeAreaSingleAndK}. We see that the area where MP is consistent on $k$-tuple-site data, but inconsistent on single-site data is always larger than the second area. Recall that we already observed this trend when we considered $2$-tuple-site data for two character states. 

\begin{table*}[h]
\renewcommand{\arraystretch}{1.5}
\center
\begin{tabular}{|c|c|c|c|}
\hline
 & Characters & $k$-tuples & Size of the area \\
\hline
$k=2, r=4$ & consistent & inconsistent & $0.0000474$  \\
$k=2, r=4$ & inconsistent &consistent & $0.000394029$  \\
$k=3, r=2$ & consistent & inconsistent & 0.0005999  \\
$k=3, r=2$& inconsistent & consistent & $0.00196$   \\
$k=3, r=4$ & consistent & inconsistent &  $0.00005104$  \\
$k=3, r=4$ & inconsistent & consistent & $0.0086552$  \\
\hline 
\end{tabular}
\caption{Sizes of the areas where MP is statistically consistent on single-site data and statistically inconsistent on $k$-tuple-site data and vice versa for $k=2,3$ and two or four character states.}
\label{SizeAreaSingleAndK}
\end{table*}

Note that the relationship between the statistical inconsistency of MP on $3$-tuple-site data and on single-site data resembles the relationship between the statistical inconsistency of MP on $2$-tuple-site data and single-site data. Therefore, we now also analyze the relationship between $2$-tuple-site data and $3$-tuple-site data. 

For both $2$-tuple-site data and $3$-tuple site data we separate the space $[0, \frac{1}{r}] \times [0, \frac{1}{r}]$ for $r=2,4$ of all possible combinations of $p$ and $q$ into two parts, such that one part contains all combinations of $p$ and $q$ such that MP is consistent and the other part contains all combinations of $p$ and $q$ such that MP is inconsistent (cf. Figure \ref{Graph2Tand3T}).

\begin{figure}[t]
\begin{center}
\includegraphics[scale=0.5]{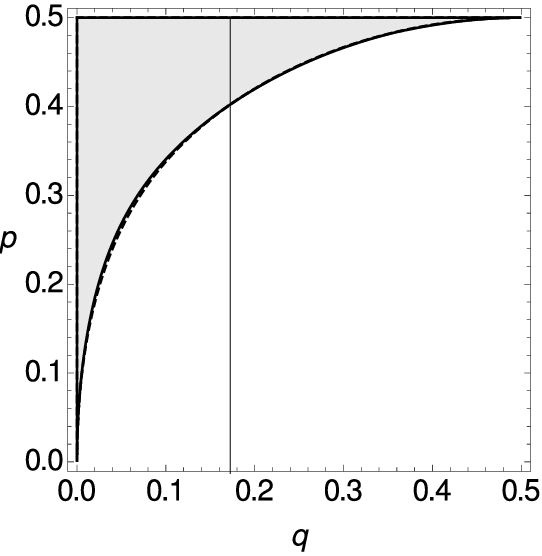}~~~ \includegraphics[scale=0.52]{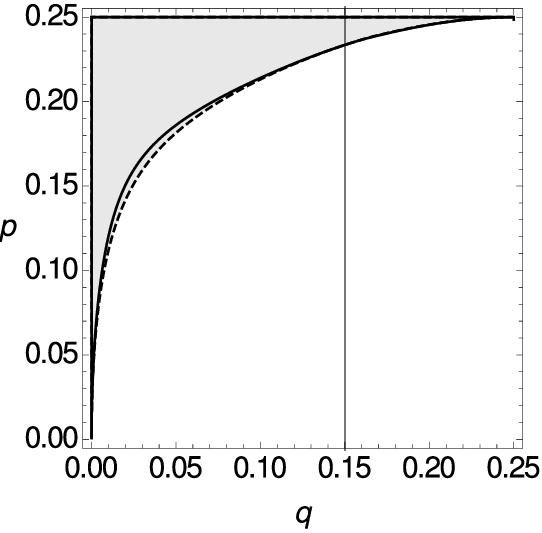}
\end{center}
\caption{Statistical inconsistency of MP on $3$-tuple-site data and $2$-tuple-site data for two (left) and four (right) character states and five taxa. The dashed curve describes the separation of $[0, \frac{1}{r}] \times [0, \frac{1}{r}]$ for $r=2,4$ into combinations of $p$ and $q$ such that MP is statistically inconsistent (gray area) or consistent (white area) on $2$-tuple-site data, while the solid curve describes this separation for the $3$-tuple-site data. The vertical line shows where both curves intersect.}
\label{Graph2Tand3T}
\end{figure}

For small $q$, the curve of the $3$-tuple-site data is above the curve of the $2$-tuple-site data. Both curves intersect at $q=0.172413$ for $r=2$ and at $q=0.150079$ for $r=4$ and then change their roles. 
The sizes of the areas where MP is statistically consistent in one case and inconsistent in the other case are shown in Table \ref{SizeArea2And3}.

\begin{table*}[h]
\renewcommand{\arraystretch}{1.5}
\center
\begin{tabular}{|c|c|c|c|}
\hline
 & 2-tuples & 3-tuples & Size of the area \\
\hline
$r=2$ & consistent & inconsistent &$0.0001245$  \\
$r=2$ & inconsistent &consistent & $0.0005302$  \\
$r=4$ & consistent & inconsistent & $0.000009073$  \\
$r=4$ & inconsistent &consistent & $0.000477$  \\
\hline 
\end{tabular}
\caption{Sizes of the areas where MP is statistically consistent on $2$-tuple-site data and statistically inconsistent on $3$-tuple-site data and vice versa for $r$ character states with $r=2,4$.}
\label{SizeArea2And3}
\end{table*}

Again, the area where MP is consistent on $3$-tuple-site data, but inconsistent on $2$-tuple-site data is always larger than the area for the reversed case. Recall that we already observed this trend in the preceding analyses. Again, there is no equivalence between the statistical inconsistency of MP on $2$-tuple-site data and on $3$-tuple site data.

\noindent
We finish this section with a summary of our results. Tables \ref{allcomb} and \ref{allcomb2} contain all combinations of consistency and inconsistency of MP on single characters, $2$-tuple-site data and $3$-tuple-site data and two or four character states, respectively. A representative example for the choice of $p$ and $q$ is given, unless there exists no such combination for $p$ and $q$. 
Note that neither for two states nor for four states, there exists a choice of $p$ and $q$ such that MP is inconsistent on single-site data and on $3$-tuple-site data, but consistent on $2$-tuple-site data. 

\begin{table*}[h]
\renewcommand{\arraystretch}{1.5}
\center
\begin{tabular}{|c|c|c|c|c|}
\hline
\multicolumn{5}{|c|}{$r=2$}\\
\hline
Characters & $2$-tuples & $3$-tuples & $p$ & $q$ \\
\hline
consistent & consistent & consistent & 0.0625 & 0.0022 \\
inconsistent & inconsistent &inconsistent & 0.4375 & 0.0625 \\
inconsistent & inconsistent & consistent & 0.0469 & 0.0012 \\
inconsistent & consistent & inconsistent & -- & -- \\
consistent & inconsistent & inconsistent & 0.4922 & 0.3960 \\
inconsistent & consistent & consistent &  0.0625 & 0.0021 \\
consistent & inconsistent & consistent & 0.3875 & 0.1524 \\
consistent & consistent &inconsistent & 0.4844 & 0.3613 \\
\hline 
\end{tabular}
\caption{All cases of combinations of consistency and inconsistency of MP on single-site data, $2$-tuple-site data and $3$-tuple-site data and two character states. Additionally, a representative example for the choice of $p$ and $q$ is given, unless there exists no such combination of $p$ and $q$.}
\label{allcomb}
\end{table*}

\begin{table*}[h]
\renewcommand{\arraystretch}{1.5}
\center
\begin{tabular}{|c|c|c|c|c|}
\hline
\multicolumn{5}{|c|}{$r=4$}\\
\hline
Characters & $2$-tuples & $3$-tuples & $p$ & $q$ \\
\hline
consistent & consistent & consistent & 0.0115  & 0.00007  \\
inconsistent &inconsistent & inconsistent & 0.2422  & 0.0833  \\
inconsistent & inconsistent & consistent & 0.1 & 0.0069 \\
inconsistent & consistent & inconsistent & --  & --  \\
consistent & inconsistent & inconsistent & 0.2244  & 0.1245  \\
inconsistent & consistent & consistent & 0.0116  & 0.00007  \\
consistent & inconsistent & consistent & 0.2170 & 0.1080  \\
consistent & consistent & inconsistent & 0.2341 & 0.1518  \\
\hline 
\end{tabular}

\caption{All cases of combinations of consistency and inconsistency of MP on single-site data, $2$-tuple-site data and $3$-tuple-site data and four character states. Additionally, a representative example for the choice of $p$ and $q$ is given, unless there exists no such combination of $p$ and $q$.}
\label{allcomb2}
\end{table*}

Lastly, Table \ref{percentage} summarizes the information on the size of the area where MP is statistically inconsistent in proportion to the size of $[0,\frac{1}{r}] \times [0,\frac{1}{r}]$. Notice that both for two and four character states the size of this area is decreasing when $k$ is increasing, i.e. when longer tuples are considered. We consider this again and in more detail in Section \ref{discussion}.

\begin{table}[htbp]
\renewcommand{\arraystretch}{1.5}
\center
\begin{tabular}{|c|c|c|}
\hline
& 2 character states & 4 character states \\
\hline
$k$=1 & $17.99\%$ & $16.04 \%$ \\
\hline
$k$=2 & $17.95 \%$ & $15.85 \%$ \\
\hline
$k$=3 &$17.79 \%$ & $15.10\%$ \\
\hline
\end{tabular}
\caption{Percentage of the area where MP is statistically inconsistent on tree $(T_1, \theta_{T_1})$ on $k$-tuple-site data in proportion to the size of $[0,\frac{1}{r}] \times [0,\frac{1}{r}]$, where $r=2$ or $4$, respectively.}
\label{percentage}
\end{table}

\subsection{Impact of the branch lengths on the statistical inconsistency of MP}
In the previous sections we have used tree $(T_1, \theta_{T_1})$ (Figure \ref{T1}) as the generating tree on which all characters evolved to show the statistical inconsistency of MP on $k$-tuple site data for five sequences and different numbers of character states. Note that $(T_1, \theta_{T_1})$ was chosen such that it resembles tree $T$ (cf. Figure \ref{lba}) on four sequences, which has been well studied for the phenomenon of long branch attraction and for which the problem of the statistical inconsistency of MP has long been known \cite{Felsensteinzone}.
In the following we will now further analyze the impact of the branch lengths of a tree, in particular the position of the long and short branches, on the statistical inconsistency of MP.

Therefore, we now consider $(T_1, \widetilde{\theta}_{T_1})$ depicted in Figure \ref{T2} as the generating tree, where we change the branch lengths of $T_1$ from $\theta_{T_1}$ to $\widetilde{\theta}_{T_1}$. There are again two long edges (labeled with $p$) and five short edges (labeled with $q$), but the two long edges are closer to each other. 

\setlength{\unitlength}{2467sp}
\begin{figure}[h]
\center
\begin{picture}(3030,3880)(3586,-5776)
\put(6351,-3461){\makebox(0,0)[lb]{\smash{{\SetFigFont{12}{14.4}{\rmdefault}{\mddefault}{\updefault}{\color[rgb]{0,0,0}$q$}%
}}}}
\thinlines
{\color[rgb]{0,0,0}\put(4801,-3661){\line(-1,-1){600}}
}%
{\color[rgb]{0,0,0}\put(3901,-2161){\line( 3,-5){900}}
}%
{\color[rgb]{0,0,0}\put(4801,-3661){\line( 1, 0){1200}}
\put(6001,-3661){\line( 1, 1){600}}
}%
{\color[rgb]{0,0,0}\put(5401,-3661){\line( 0,-1){1500}}
}%
\put(3401,-3561){\makebox(0,0)[lb]{\smash{{\SetFigFont{13}{14.4}{\rmdefault}{\mddefault}{\updefault}{\color[rgb]{0,0,0}$T_1:$}%
}}}}
\put(3601,-2161){\makebox(0,0)[lb]{\smash{{\SetFigFont{12}{14.4}{\rmdefault}{\mddefault}{\updefault}{\color[rgb]{0,0,0}1}%
}}}}
\put(6601,-4561){\makebox(0,0)[lb]{\smash{{\SetFigFont{12}{14.4}{\rmdefault}{\mddefault}{\updefault}{\color[rgb]{0,0,0}5}%
}}}}
\put(3901,-4561){\makebox(0,0)[lb]{\smash{{\SetFigFont{12}{14.4}{\rmdefault}{\mddefault}{\updefault}{\color[rgb]{0,0,0}  2}%
}}}}
\put(4501,-3061){\makebox(0,0)[lb]{\smash{{\SetFigFont{12}{14.4}{\rmdefault}{\mddefault}{\updefault}{\color[rgb]{0,0,0}$p$}%
}}}}
\put(5101,-3561){\makebox(0,0)[lb]{\smash{{\SetFigFont{12}{14.4}{\rmdefault}{\mddefault}{\updefault}{\color[rgb]{0,0,0}$q$}%
}}}}
\put(5701,-3561){\makebox(0,0)[lb]{\smash{{\SetFigFont{12}{14.4}{\rmdefault}{\mddefault}{\updefault}{\color[rgb]{0,0,0}$q$}%
}}}}
\put(6501,-4061){\makebox(0,0)[lb]{\smash{{\SetFigFont{12}{14.4}{\rmdefault}{\mddefault}{\updefault}{\color[rgb]{0,0,0}$q$}%
}}}}
\put(4601,-4061){\makebox(0,0)[lb]{\smash{{\SetFigFont{12}{14.4}{\rmdefault}{\mddefault}{\updefault}{\color[rgb]{0,0,0}$q$}%
}}}}
\put(6601,-3061){\makebox(0,0)[lb]{\smash{{\SetFigFont{12}{14.4}{\rmdefault}{\mddefault}{\updefault}{\color[rgb]{0,0,0} 4}%
}}}}
\put(5401,-5461){\makebox(0,0)[lb]{\smash{{\SetFigFont{12}{14.4}{\rmdefault}{\mddefault}{\updefault}{\color[rgb]{0,0,0}3}%
}}}}
\put(5551,-4541){\makebox(0,0)[lb]{\smash{{\SetFigFont{12}{14.4}{\rmdefault}{\mddefault}{\updefault}{\color[rgb]{0,0,0}  $p$}%
}}}}
\put(4791,-3661){\circle*{150}}
\put(4791,-3461){\makebox(0,0)[lb]{\smash{{\SetFigFont{12}{14.4}{\rmdefault}{\mddefault}{\updefault}{\color[rgb]{0,0,0}$\rho$}%
}}}}
{\color[rgb]{0,0,0}\put(6001,-3661){\line( 1,-1){600}}
}%
\end{picture}%
\caption{Phylogenetic tree $(T_1,\widetilde{\theta}_{T_1})$ where the edges are labeled with the substitution probabilities of $\widetilde{\theta}_{T_1}$. For the $N_r$-model we arbitrarily choose the marked inner vertex as root $\rho$.}
\label{T2}
\end{figure}

By replacing $(T_1, \theta_{T_1})$ with $(T_1,\widetilde{\theta}_{T_1})$ in the calculation of the expected MP tree, we again search for values  for $p$ and $q$ such that $T_1$ is not the expected MP tree. Both for two and four character states we find such values for $2$- and $3$-tuple site data (see Table \ref{ValT2}).

\begin{table}[h]
\center
\renewcommand{\arraystretch}{1.5}
\begin{tabular}{|c|c|c|}
\hline
 & $p$ &  $q$ \\
\hline 
$k=2$ \& $r=2$ & $\frac{1}{128}  \approx 0.0078$ &   $\frac{1}{16384} \approx 0.000061$\\
\hline 
$k=2$ \& $r=4$ & $\frac{7}{32} \approx 0.21875$ & $ \frac{1}{48} \approx 0.02083$ \\
\hline 
$k=3$ \& $r=2$ & $\frac{1}{8} = 0.125$ &  $\frac{1}{64} = 0.15625$ \\
\hline 
$k=3$ \& $r=4$ & $\frac{31}{128} \approx 0.24219$ &  $\frac{1}{48} \approx 0.02083$ \\
\hline
\end{tabular}
\caption{Representative examples for $p$ and $q$ for which MP is statistically inconsistent on tree $(T_1,\widetilde{\theta}_{T_1})$ on $2$- and $3$-tuple site data and for both two and four character states. As before, the value $k$ is the length of the $k$-tuple and $r$ is the number of character states. }
\label{ValT2}
\end{table}
Here, in all cases $T_{14}$ (see Table \ref{Table-alltrees}) is the expected MP tree, so as before, MP tends to group the long branches together.

Now, we again separate the space $[0, \frac{1}{2}] \times [0, \frac{1}{2}]$, respectively $[0, \frac{1}{4}] \times [0, \frac{1}{4}]$, of all possible combinations of $p$ and $q$ into two parts, such that one part contains all possible combinations of $p$ and $q$ such that MP is consistent, while the other part contains all combinations of $p$ and $q$ such that it is inconsistent. We do this for $2$ and $3$-tuple site data for two and four character states and compare them with the spaces for single-site data. All curves are plotted in Figure \ref{SI2ZT2TandCT2}.  \\

\begin{figure}[htbp]
 \centering
    \begin{subfigure}{0.4\textwidth}
      \centering
      \includegraphics[scale=0.5]{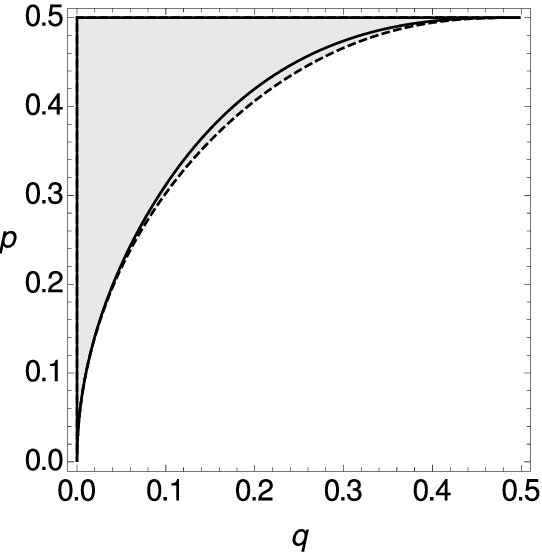} 
     \caption{$k=2, r=2$}
    \end{subfigure}%
    ~~~~
    \begin{subfigure}{0.4\textwidth}
      \centering
     \includegraphics[scale=0.5]{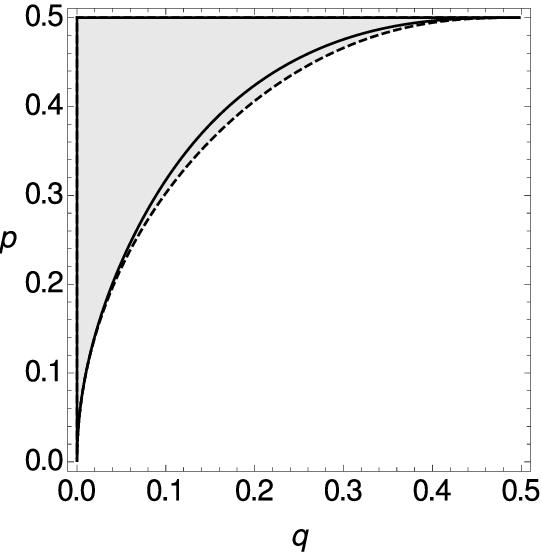} 
      \caption{$k=3, r=2$}
    \end{subfigure} \\[0.5cm]
\begin{subfigure}{0.4\textwidth}
      \centering
      \includegraphics[scale=0.5]{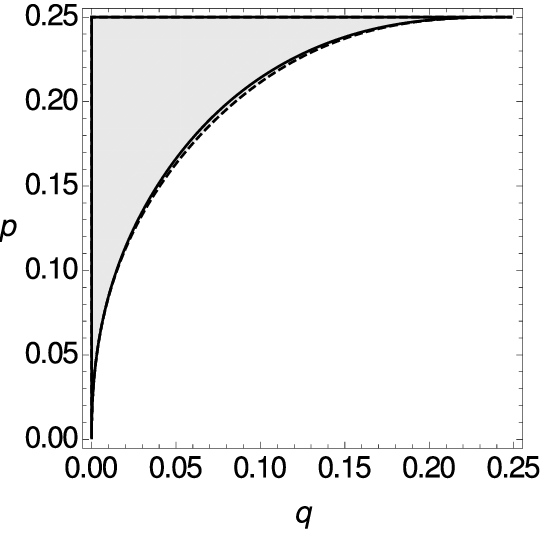}
      \caption{$k=2, r=4$}
    \end{subfigure}%
     ~~~~
    \begin{subfigure}{0.4\textwidth}
      \centering
    \includegraphics[scale=0.5]{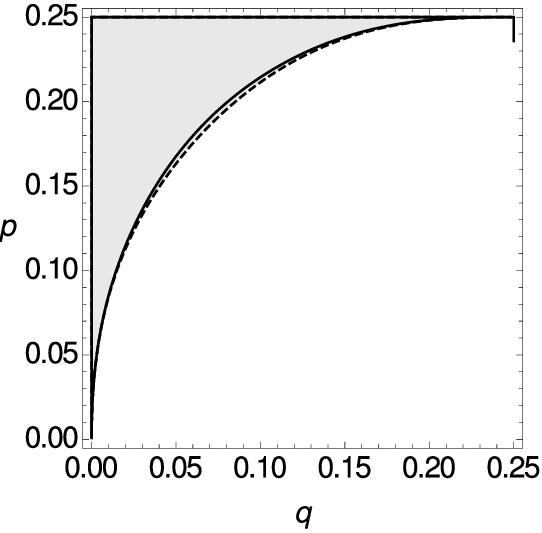}
      \caption{$k=3, r=4$}
    \end{subfigure} 
\caption{Statistical inconsistency of MP on $k$-tuple-site data and single-site data, and two/four character states that evolved on tree $(T_1,\widetilde{\theta}_{T_1})$. The dashed curve describes the separation of $[0, \frac{1}{2}] \times [0, \frac{1}{2}]$ (in Subfigures (a) and (b)), respectively $[0, \frac{1}{4}] \times [0, \frac{1}{4}]$ (in Subfigures (c) and (d)), into combinations of $p$ and $q$ such that MP is statistically inconsistent (gray area) or consistent (white area) on single-site data. The solid curve shows this separation for $k$-tuple-site data.}
\label{SI2ZT2TandCT2}
\end{figure}

Here the trends of the curves are different to the corresponding cases concerning tree $(T_1,\theta_{T_1})$ as depicted in Figure \ref{GraphsAll}.  It can be seen that the curve of the single-site data is never above the curve of the $2$-, respectively $3$-tuple-site data. With Mathematica \citep{Mathematica} we verified that there exists no combination of $p$ and $q$ such that MP is consistent on single-site data while it is inconsistent on $2$-, respectively $3$-tuple-site data. So, if for trees of type $(T_1,\widetilde{\theta}_{T_1})$ MP is statistically inconsistent on single-site data, then MP is also statistically inconsistent on $2$- and $3$-tuple-site data. In this regard, we might say that the inconsistency on single-site data here implies the inconsistency on 2- and 3-tuple-site data. 
Furthermore, we now compare the statistical (in)consistency of MP on $2$- and $3$-tuple-site data both for two and four character states. The resulting curves are depicted in Figure \ref{SIT2CodandTup}.

\begin{figure}[htbp]
\centering
    \begin{subfigure}{0.4\textwidth}
      \centering
     \includegraphics[scale=0.48]{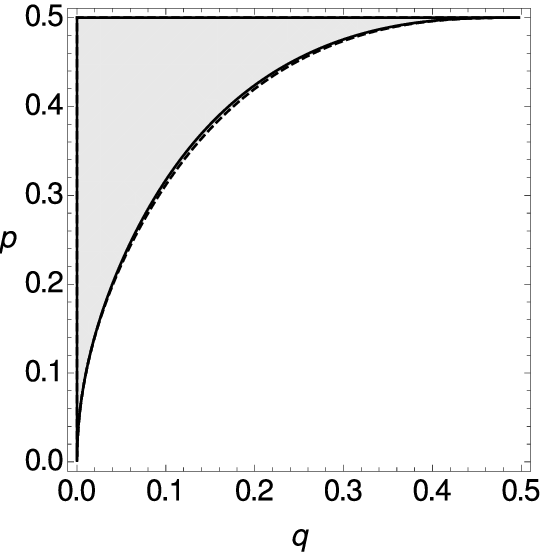}
     \subcaption{$k=2$ and $k=3, r=2$}
 \end{subfigure} ~~~
\begin{subfigure}{0.4\textwidth}
     \includegraphics[scale=0.5]{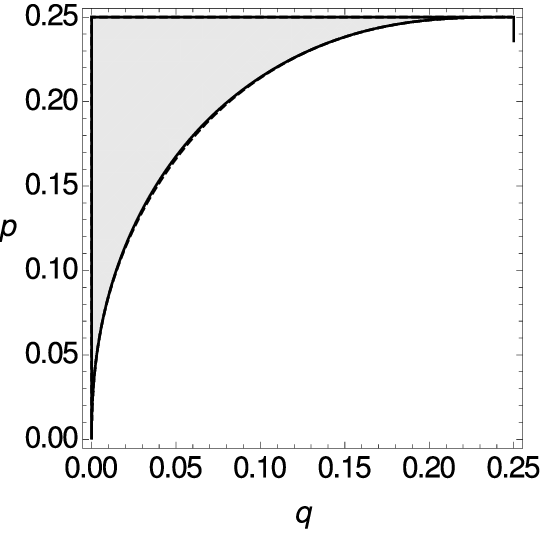}
      \subcaption{$k=2$ and $k=3, r=4$}
    \end{subfigure} 
\caption{Statistical inconsistency of MP on $2$- and $3$-tuple-site data, and two/four character states that evolved on tree $(T_1,\widetilde{\theta}_{T_1})$. The dashed curve describes the separation of $[0, \frac{1}{2}] \times [0, \frac{1}{2}]$ (in Subfigure (a)), respectively $[0, \frac{1}{4}] \times [0, \frac{1}{4}]$ (in Subfigure (b)), into combinations of $p$ and $q$ such that MP is statistically inconsistent (gray area) or consistent (white area) on $2$-tuple-site data. The solid curve shows this separation for the $3$-tuple-site data.}
\label{SIT2CodandTup}
\end{figure}

Here, the curve of the $3$-tuple-site data is always above the curve of the $2$-tuple-site data. With Mathematica \citep{Mathematica} we again verified that there exists no combination of $p$ and $q$ such that MP is consistent on $2$-tuple-site data while it is inconsistent on $3$-tuple-site data.
Finally, we calculate the sizes of the areas where MP is statistically inconsistent on single, $2$-tuple and $3$-tuple-site data for two and four character states (see Table \ref{SizeT2}). 
Similar to the previous analyses based on tree $(T_1, \theta_{T_1})$, the relative size of the areas where MP is statistically inconsistent decreases with an increasing $k$, i.e. with an increasing tuple length. Note, however, that when comparing the results for trees $(T_1, \theta_{T_1})$ and $(T_1, \tilde{\theta}_{T_1})$ the percentage of the area where MP is statistically inconsistent on tree $(T_1, \tilde{\theta}_{T_1})$ is in all cases higher than the corresponding percentage for tree $(T_1, \theta_{T_1})$ (cf. Tables \ref{percentage} and \ref{SizeT2}).

\begin{table}
\renewcommand{\arraystretch}{1.5}
\center
\begin{tabular}{|c|c|c|}
\hline
& 2 character states & 4 character states \\
 \hline
 $k=1$ &  20.62\%   & 18.03\%\\
 \hline
 $k=2$ & 19.31\% & 17.48\%\\
 \hline
 $k=3$ & 18.85\%  & 17.33\%\\
 \hline
\end{tabular}
\caption{Percentage of the area where MP is statistically inconsistent on tree $(T_1,\widetilde{\theta}_{T_1})$ on $k$-tuple-site data in proportion to the size of $[0,\frac{1}{r}] \times [0,\frac{1}{r}]$, where $r=2$ or $4$, respectively.}
\label{SizeT2}
\end{table}

Summarizing the above we see that in this case the branch lengths of the tree $(T_1, \tilde{\theta}_{T_1})$, in particular the fact that the two long edges are closer to each other, have several effects when compared to tree $(T_1, \theta_{T_1})$. On the one hand, the statistical inconsistency of MP on single-site data now implies its statistical inconsistency on $2$- and $3$-tuple site data (which was not the case for tree $(T_1, \theta_{T_1})$). On the other hand, the size of the area where MP is statistically inconsistent in proportion to the size of $[0, \frac{1}{r}] \times [0, \frac{1}{r}]$ is higher for tree $(T_1, \tilde{\theta}_{T_1})$ than for tree $(T_1, \theta_{T_1})$ regardless of the tuple length $k$ and the number of character states $r$. Possibly this is due to the fact that the total distance between the leaves pending on the long branches, namely 1 and 3, is here only $p+q+q = 2p + q$, whereas in the case of $(T_1,\theta_{T_1})$, the distances between the leaves on long branches (in that case, 1 and 4) is $2p+2q$. This has an impact on the character probabilities and therefore might cause some inhibition for parsimony to (wrongly) group these leaves together.

\section{Discussion}\label{discussion}
In this paper we have analyzed the statistical consistency of MP on $2$-tuple-site data and on $3$-tuple-site data for five sequences and alphabets with two or four elements, respectively. By giving representative examples we could show that MP is statistically inconsistent in all cases. 
We assume that the statistical inconsistency of MP will persist if we consider larger alphabets (i.e. more character states) or longer tuples. 
In particular, we conjecture that for any choice of $k$, there exists a number $n$ of sequences such that MP is statistically inconsistent on $k$-tuple-site data for $n$ sequences. 
The idea behind this conjecture is that we assume that our results for five sequences will extend to larger trees. The reason is that one can construct larger trees with $n > 5$ by using tree $(T_1,\theta_{T_1})$ (Figure \ref{T1}) as a basis and adding more taxa, e.g. by replacing leaf 4 with a rooted binary subtree $T_{\epsilon}$  on $n-4$ leaves, where all edges are of length $\epsilon$ (see Figure \ref{moreLeaves}). For $\epsilon \rightarrow 0$, this construction will preserve the main structure of tree $(T_1,\theta_{T_1})$ (Figure \ref{T1}), which is why we assume MP to be also statistically inconsistent on $k$-tuple-site data in this case.

\begin{figure}
\center
\includegraphics[scale=0.5]{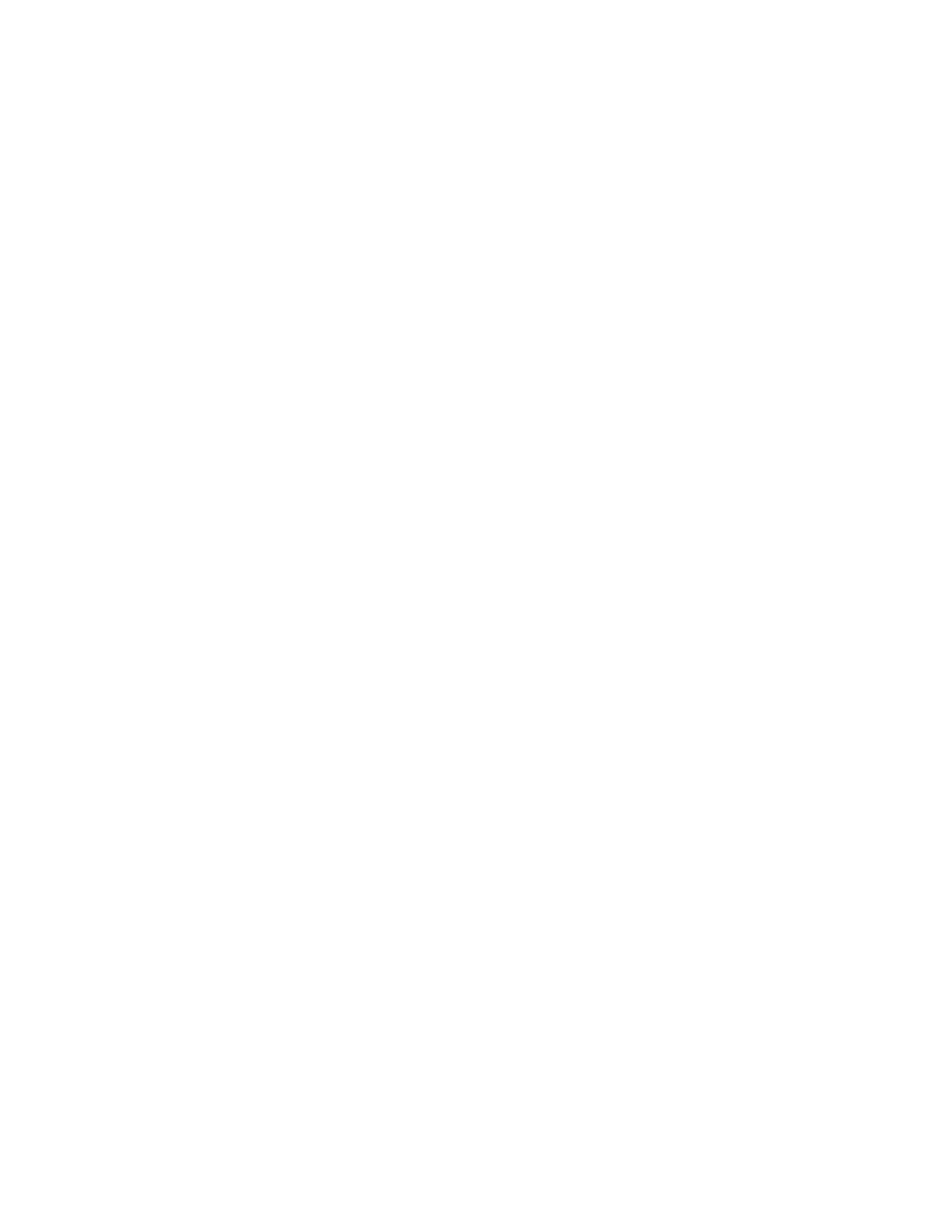}
\caption{An unrooted phylogenetic tree on $n > 5$ leaves, where $T_\epsilon$ is a rooted binary phylogenetic tree on $n-4$ leaves, where all edges are of length $\epsilon$. For $\epsilon \rightarrow 0$, this tree converges to the tree $(T_1, \theta_{T_1})$ depicted in Figure \ref{T1}. }
\label{moreLeaves}
\end{figure}

Furthermore, recall that in our explicit examples concerning the statistical inconsistency of MP on $2$- and $3$-tuple-site data (when tree $(T_1, \theta_{T_1})$ was the generating tree), trees $T_3$ and $T_6$ were the expected MP trees. This might lead to the assumption that this holds in general, as in these two trees the long edges are wrongly grouped together due to long branch attraction. 
However, $T_3$ and $T_6$ are not always the expected MP trees in cases where MP is statistically inconsistent. 
For instance, for $2$-tuple-site data and two character states (where the characters again evolve on tree $(T_1,\theta_{T_1})$ under the i.i.d. $N_2$-model), setting $p$ to $0.499695$ and $q$ to $0.480225$ yields trees $T_5$ and $T_{10}$ as the expected MP trees. Note that in $T_5$ and $T_{10}$ the long branches are closer than in tree $T_1$, but they are not directly grouped together (i.e. they do not form a so-called cherry) as in $T_3$ or $T_6$. This may still be regarded as a weak case of long branch attraction, but it might also be interesting to further investigate this case in future research.
In the case of tree $(T_1, \tilde{\theta}_{T_1})$, however, tree $T_{14}$ was always the expected MP tree, which can again be seen as a classical case of long branch attraction as the two long branches of $(T_1, \tilde{\theta}_{T_1})$ are grouped together in $T_{14}$.

Apart from the statistical inconsisteny of MP itself, we could show that the equivalence between the statistical inconsistency of MP on single-site data and on $k$-tuple-site data established in \cite{Steel-Penny} for four sequences no longer holds for five sequences. 
On the contrary, using tree $(T_1, \theta_{T_1})$ as the generating tree we find cases where MP is statistically inconsistent on single-site data, but statistically consistent on $2$-tuple-site data or $3$-tuple-site data and vice versa. We also find cases where MP is statistically inconsistent on $2$-tuple-site data, but statistically consistent on $3$-tuple-site data and vice versa. 

However, by considering an alternative generating tree, namely $(T_1, \widetilde{\theta}_{T_1})$, where the long edges are closer to each other, we could also find a case where the inconsistency of MP on single-site data leads to the statistical inconsistency on $2$-tuple-site data and this also leads to the statistical inconsistency of MP on $3$-tuple-site data. So in this case there exist no examples where MP is statistically inconsistent on single-site data but statistically consistent on $2$- or $3$-tuple-site data.

In general, the difference between single-site-, $2$-tuple-site and $3$-tuple-site data is relatively small. We could, however, observe that in our examples the size of the area where MP was statistically consistent on $\tilde{k}$-tuple-site data, but statistically inconsistent on $\hat{k}$-tuple-site data with $\tilde{k} > \hat{k}$ was always greater than the size of the area where MP was statistically inconsistent on $\tilde{k}$-tuple-site data, but statistically consistent on $\hat{k}$-tuple-site data. For $2$-tuple-site data and two character states we even found a tree with branch lengths where $2$-tuple-site data were always better than single-site data.
We also observed that the size of the area where MP is statistically inconsistent in proportion to the size of $[0,\frac{1}{r}] \times [0,\frac{1}{r}]$ decreases when $k$ is increasing, i.e. when longer tuples are considered (cf. Table \ref{percentage}).
We conjecture that the size of the area where MP converges to the wrong tree will converge to zero with growing $k$, because if $k$ grows, this leads to a loss of information as more and more $k$-tuples become non-informative. Comparing characters on five taxa, $2$- and $3$-tuples we can already observe this trend, because $37.5 \%$ of all characters, $53.125 \%$ of all $2$-tuples and $82.91 \%$ of all $3$-tuples are non-informative (calculations not shown). In the extreme case of all characters (associated with $k$-tuples) being non-informative, MP could be considered statistically consistent in the sense that it does {\em not} converge to the wrong tree, because then all trees would be MP trees. However, it would also not converge to the correct tree, so this would be a very weak definition of statistical consistency.

Thus, we conclude that applying MP to $k$-tuple-site data instead of single-site data may to some extent help to reduce the impact of statistical inconsistency, but it cannot avoid it, unless we use a very weak definition of statistical consistency.
However, our considerations were of a mainly theoretical nature: The general aim was to analyze whether the results of \citep{Steel-Penny} could be generalized to larger trees, which we could show is not the case for the equivalance of single-site data and $k$-tuple-site data, but which is true for $k$-tuple inconsistency. The practical implications of our results, e.g. for biological data analyses, remain an open problem to be investigated in future research.

\section{Acknowledgement}
The first and second author thank the University of Greifswald for the
Bogislaw studentship and the Landesgraduiertenf\"{o}rderung studentship, respectively, under which this work was conducted. Moreover, we wish to thank two anonymous reviewers for very helpful suggestions on an earlier version of this manuscript.

%\bibliographystyle{natbib}%%%%Bibliography style file
%\bibliography{bibl}%%%bibliography file(.bib)

%\bibliographystyle{spbasic}      % basic style, author-year citations
\bibliographystyle{spmpsci}      % mathematics and physical sciences
\bibliography{bibl}   % name your BibTeX data base

%\begin{acknowledgements}
%If you'd like to thank anyone, place your comments here
%and remove the percent signs.
%\end{acknowledgements}

% BibTeX users please use one of
%\bibliographystyle{spbasic}      % basic style, author-year citations
%\bibliographystyle{spmpsci}      % mathematics and physical sciences
%\bibliographystyle{spphys}       % APS-like style for physics
%\bibliography{}   % name your BibTeX data base

% Non-BibTeX users please use
%\begin{thebibliography}{}
%
% and use \bibitem to create references. Consult the Instructions
% for authors for reference list style.
%
%\bibitem{RefJ}
% Format for Journal Reference
%Author, Article title, Journal, Volume, page numbers (year)
% Format for books
%\bibitem{RefB}
%Author, Book title, page numbers. Publisher, place (year)
% etc
%\end{thebibliography}

\section{Appendix}

All calculations in this manuscript were carried out with Mathematica \citep{Mathematica}.
By way of example, we will demonstrate the respective calculations for $2$-tuple-site data and two character states (corresponding to the results presented in Section \ref{Results1}). 
To begin with, we implemented both the well-known Fitch algorithm \citep{Fitch} for the calculation of the parsimony score of a character or tuple, as well as the well-known Felsenstein algorithm \citep{Felsenstein1981} to compute the probabilities of characters and tuples on a given phylogenetic tree.
Note that we assumed tree $(T_1,\theta_{T_1})$ (cf. Figure \ref{T1_ex1}) to be the generating tree on which all characters evolved according to the i.i.d. $N_2$-model. 
Based on these two algorithms, we first calculated the expected parsimony score for 2-tuple-site data and two character states according to Formula (\ref{expMT}) for all trees $T' \in \mathcal{T}$, where $\mathcal{T}$ is the set of all phylogenetic $X$-trees on five leaves. 
We summarized the results in a vector $\mathtt{eps2Tuples}$ containing the expected parsimony score for each tree as entries. These entries were sorted according to Table \ref{Table-alltrees}, i.e. the first entry of $\mathtt{eps2Tuples}$ contained the expected parsimony score of tree $T_1$ and so on.
Recall that in our case the expected parsimony scores depend on two parameters, $p$ and $q$ (representing the edge lengths of the generating tree), where we have $0 < p,q < \frac{1}{2}$ (as we are considering two character states). To show that MP is statistically inconsistent on 2-tuple-site data, we had to find values for $p$ and $q$ such that the expected parsimony score of $T_1$ (i.e. the first entry of the vector $\mathtt{eps2Tuples}$) was not the minimum of all values in $\mathtt{eps2Tuples}$. Thus, we had to find values of $p$ and $q$ fulfilling the following constraints:
\begin{align}
 \mathtt{eps2Tuples}[1] &> min [\mathtt{eps2Tuples}] \label{expr1} \\
0 < ~&p < \frac{1}{2}  \label{expr2} \\
0< ~&q < \frac{1}{2}.  \label{expr3}
\end{align} 
To find an explicit example for such values of $p$ and $q$ (as for example used in the proof of Theorem \ref{Theorem2}) we used the predefined Mathematica function $\mathtt{FindInstance[expr, vars]}$, which (if they exist) finds values for the variables $\mathtt{vars}$ where the expression $\mathtt{expr}$ is true. In our example, the expressions are the three Inequalities (\ref{expr1}), (\ref{expr2}) and (\ref{expr3}), and the variables are $p$ and $q$. So we used this function in the following way:
\begin{align*}
\text{FindInstance}[ &\{\mathtt{epst2Tuples}[[1]]>\text{Min}[\mathtt{epst2Tuples}],0<p\leq \frac{1}{2}, \\ &0<p\leq \frac{1}{2}\},\{p,q\}].
\end{align*}
The results are explicit values for $p$ and $q$ such that MP is statistically inconsistent (in our example, i.e. for $k=2$ and $r=2$, this yielded the values $p=\frac{91}{256} \approx 0.35547$ and $q=0.1$ as already shown in the proof of Theorem \ref{Theorem2}). \\
However, we not only wanted to find one explicit example of $p$ and $q$, but the set of all values for $p$ and $q$ such that MP is statistically inconsistent on 2-tuple-site data. 
To plot all such combinations of $p$ and $q$ we used the Mathematica function $\mathtt{RegionPlot[pred,\{x, x_{min}, x_{max}\},\{y,y_{min}, y_ {max}\}]}$ which shows the region where the predicate $\mathtt{pred}$ is true. In our example the predicate was Inequality  (\ref{expr1}) and the parameters $\mathtt{x}$ and $\mathtt{y}$ were our parameter $p$ and $q$ with $p_{min}=q_{min}=0$ and $p_{max}=q_{max}=\frac{1}{2}$ as in Inequalities (\ref{expr2}) and (\ref{expr3}). Thus, we used this function as follows:
\begin{align*}
\text{RegionPlot}\left[ \mathtt{epst2Tuples}[[1]]>\text{Min}[\mathtt{epst2Tuples}],\left\{q, 0, \frac{1}{2}\right\}, \left\{p, 0, \frac{1}{2}\right\}\right].
\end{align*}
The results are shown in Figure \ref{SI2ZT}. 
Note that in this figure we can see that the areas where MP is statistically inconsistent or consistent on 2-tuple-site data are separated by a curve. With the function $\mathtt{Reduce}$ and the same input as we used for the function $\mathtt{FindInstance}$ we obtained the set of all values which fulfill Inequalities (\ref{expr1}), (\ref{expr2}) and (\ref{expr3}). 
The result of this function is a very complicated term, which is why we skip the technical details here. Basically, the problem is that the corresponding curve is not as smooth as it appears at first glance in Figure \ref{SI2ZT}. This is due to the fact that inconsistency is not everywhere caused by the same tree. For instance, when $p=\frac{91}{256} \approx 0.35547$ and $q=0.1$, tree $T_3$ has a lower expected parsimony score than $T_1$, but when $p= \frac{8187}{16384} = 0.49959 \approx$ and $q=\frac{1967}{4096} \approx 0.48022$, this is not the case. Instead here $T_5$ has a lower parsimony score.

To summarize, by implementing algorithms for the calculation of parsimony scores and probabilities of characters and tuples on a phylogenetic tree, as well as by using the three predefined Mathematica functions $\mathtt{FindInstance}$, $\mathtt{RegionPlot}$ and $\mathtt{Reduce}$, we computed all our results for 2-tuple-site data and two character states. Analogously, all other results presented in this manuscript were obtained.

\end{document}